\newcommand{\GI}{G_{DC}}
\newcommand{\oli}{\overline}
\newcommand{\ox}{\overline{x}}
\newtheorem{cor}{Corollary}
\begin{document}

\mainmatter

\title{Edges and Switches, Tunnels and Bridges}

\author{D. Eppstein\inst{1} \and M. van Kreveld\inst{2} \and E. Mumford\inst{3} \and B. Speckmann\inst{3}}

%

\institute{Department of Computer Science, University of California, Irvine, \email
        {eppstein@ics.uci.edu}
\and Department of Information and Computing Sciences, Utrecht
University, \email{marc@cs.uu.nl} \and Department of
Mathematics and Computer Science, TU Eindhoven,
\email{e.mumford@tue.nl} and \email{speckman@win.tue.nl}}


\maketitle

\begin{abstract}\noindent
Edge casing is a well-known method to improve the readability
of drawings of non-planar graphs. A cased drawing orders the
edges of each edge crossing and interrupts the lower edge in an
appropriate neighborhood of the crossing. Certain orders will
lead to a more readable drawing than others. We formulate
several optimization criteria that try to capture the concept
of a ``good'' cased drawing. Further, we address the
algorithmic question of how to turn a given drawing into an
optimal cased drawing. For many of the resulting optimization
problems, we either find polynomial time algorithms or
NP-hardness results.
\end{abstract}

\section{Introduction}

Drawings of non-planar graphs necessarily contain edge
crossings. The vertices of a drawing are commonly marked with a
disk, but it can still be difficult to detect a vertex within a
dense cluster of edge crossings. \emph{Edge casing} is a
well-known method---used, for example, in electrical drawings
and, more generally, in information visualization---to
alleviate this problem and to improve the readability of a
drawing. A \emph{cased drawing} orders the edges of each
crossing and interrupts the lower edge in an appropriate
neighborhood of the crossing. One can also envision that every
edge is encased in a strip of the background color and that the
casing of the upper edge covers the lower edge at the crossing.
See Fig.~\ref{fig:crossingdrawing} for an example.
\begin{figure}[b]
\centering
\includegraphics{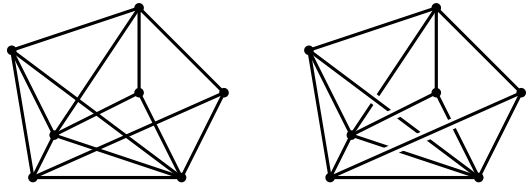}
\caption{Normal and cased drawing of a graph.}
\label{fig:crossingdrawing}
\end{figure}

If there are no application specific restrictions that dictate
the order of the edges at each crossing, then we can in
principle choose freely how to arrange them. Certain orders
will lead to a more readable drawing than others. In this paper
we formulate several optimization criteria that try to capture
the concept of a ``good'' cased drawing. Further, we address
the algorithmic question of how to turn a given drawing into an
optimal cased drawing.

\paragraph{\bfseries Definitions.} Let $G$ be a graph with $n$ vertices and $m$ edges and let $D$ be a drawing of $G$ with $k$ crossings. We want to turn $D$ into a cased drawing where the width of the casing is given in the variable $\mathit{casing width}$. To avoid that the casing of an edge covers a vertex we assume that no vertex $v$ of $D$ lies on (or very close to) an edge $e$ of $D$ unless $v$ is an endpoint of $e$. Further, no more than two edges of $D$ cross in one point and any two crossings are far enough apart so that the casings of the edges involved do not interfere. With these assumptions we can consider crossings independently. Without these restrictions the problem changes significantly---optimization problems that are solvable in polynomial time can become NP-hard. Additional details can be found in Appendix~\ref{sec:restrictions}.

We define the \emph{edge crossing graph} $\GI$ for $D$ as follows. $\GI$ contains a vertex for every edge of $D$ and an edge for any two edges of $D$ that cross.
Let $C$ be a crossing between two edges $e_1$ and $e_2$. In a
cased drawing either $e_1$ is drawn on top of $e_2$ or vice
versa. If $e_1$ is drawn on top of $e_2$ then we say that $C$
is a \emph{bridge} for $e_1$ and a \emph{tunnel} for $e_2$. In
Fig.~\ref{fig:tunnelbridge}, $C_1$ is a bridge for $e_1$ and a
tunnel for $e_2$. The \emph{length} of a tunnel is ${\mathit{casing width}}/\sin\alpha$, where $\alpha\leq \pi/2$ is the angle of the edges at the crossing.
A pair of consecutive crossings $C_1$ and
$C_2$ along an edge $e$ is called a \emph{switch} if $C_1$ is a
bridge for $e$ and $C_2$ is a tunnel for $e$, or vice versa. In
Fig.~\ref{fig:tunnelbridge}, $(C_1, C_2)$ is a switch.
\begin{figure}[h]
  \centering
  \vspace{-\baselineskip}
  \includegraphics{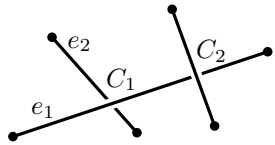}
  \vspace{-.25\baselineskip}
  \caption{Tunnels and bridges.}
  \vspace{-2\baselineskip}
  \label{fig:tunnelbridge}
\end{figure}

\paragraph{\bfseries Stacking and weaving.} When we turn a given drawing into a cased drawing, we need
to define a drawing order for every edge crossing. We can
choose to either establish a global top-to-bottom order on the
edges, or to treat each edge crossing individually. We call the
first option the \emph{stacking model} and the second one the
\emph{weaving model}, since cyclic overlap of three or more
edges can occur (see Fig.~\ref{fig:stack-weave}).
\begin{figure}[h]
  \centering
  \vspace{-\baselineskip}
  \includegraphics{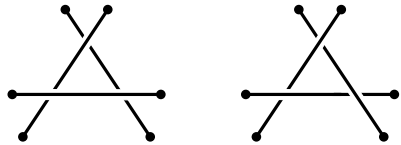}
  \vspace{-.25\baselineskip}
  \caption{Stacking and weaving.}
  \label{fig:stack-weave}
\end{figure}

\paragraph{\bfseries Quality of a drawing.}
Globally speaking, two factors may influence the readability of
a cased drawing in a negative way. Firstly, if there are many
switches along an edge then it might become difficult to follow
that edge. Drawings that have many switches can appear somewhat
chaotic. Secondly, if an edge is frequently below other edges,
then it might become hardly visible. These two considerations
lead to the following optimization problems for a drawing $D$.
\begin{description}
\item[{\sc MinTotalSwitches}] Minimize the total number of switches.
\item[{\sc MinMaxSwitches}] Minimize the maximum number of switches for any edge.
\item[{\sc MinMaxTunnels}]  Minimize the maximum number of tunnels for any edge.
\item[{\sc MinMaxTunnelLength}] Minimize the maximum total length of tunnels for any edge.
\item[{\sc MaxMinTunnelDistance}] Maximize the minimum distance between any two consecutive tunnels.
\end{description}
Fig.~\ref{fig:weavingopt} illustrates that the weaving model is
stronger than the stacking model for {\sc
MinTotalSwitches}---no cased drawing of this graph in the
stacking model can reach the optimum of four switches. For, the
thickly drawn bundles of $c>4$ parallel edges must be cased as
shown (or its mirror image) else there would be at least $c$
switches in a bundle, the four vertical and horizontal segments
must cross the bundles consistently with the casing of the
bundles, and this already leads to the four switches that occur
as drawn near the midpoint of each vertical or horizontal
segment. Thus, any deviation from the drawing in the casing of
the four crossings between vertical and horizontal segments
would create additional switches. However, the drawing shown is
not a stacked drawing.

\begin{figure}[htb]
  \centering
  \vspace{-.5\baselineskip}
  \includegraphics[width=.6\columnwidth]{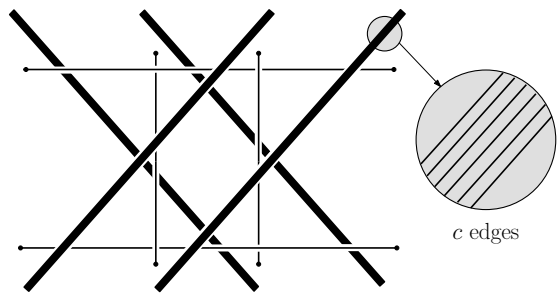}
  \caption{Optimal drawing in the weaving model for {\sc MinTotalSwitches}.}
  \label{fig:weavingopt}
  \vspace{-1.5\baselineskip}
\end{figure}

\paragraph{\bfseries Related work.} If we consider only simple arrangements of line segments in the plane as our initial drawing, then there is a third model to consider, an intermediate between stacking and weaving: drawings which are plane projections of line segments in three dimensions. We call this model the \emph{realizable model}. Clearly every cased drawing in the stacking model is also a drawing in the realizable model, but not every cased drawing in the weaving model can be realized (see~\cite{PacPolWel-Algo-93}). The optimal drawing in Fig.~\ref{fig:weavingopt} can be realized, hence the realizable model is stronger than the stacking model. In Appendix~\ref{sec:optweave} we show that the weaving model is stronger than the realizable model.

\paragraph{\bfseries Results.} For many of the problems described above, we either find
polynomial time algorithms or NP-hardness results in both the
stacking and weaving models. We summarize our results in Table~\ref{tbl:results}. In this paper we assume that our input drawing is a straight line drawing, but several of our results also generalize to curved drawings. Section~\ref{sec:switches} presents the results concerning the optimization problems that seek to minimize the number of switches and Section~\ref{sec:tunnels} discusses our solutions to the optimization problems that concern the tunnels. In Appendix~\ref{sec:restrictions} we show that {\sc MinTotalSwitches} becomes NP-hard in both the weaving and the stacking model if we allow more than three edges to cross in one point. We conclude with some open problems.

\begin{table}[t]
\centering
\begin{tabular}{|l|l|l|}
\hline
Model & Stacking & Weaving \\
\hline
{\sc MinTotalSwitches} & \emph{open} & $O(qk+q^{5/2}\log^{3/2} k)$\\
{\sc MinMaxSwitches} & \emph{open} & \emph{open} \\
{\sc MinMaxTunnels} & $O(m\log m+k)$ \emph{exp.} & $O(m^4)$ \\
{\sc MinMaxTunnelLength} & $O(m\log m+k)$ \emph{exp.} & NP-hard \\
{\sc MaxMinTunnelDistance} & $O(m\log m+k\log m)$ \emph{exp.} & $O((m+K)\log m)$ \emph{exp.}\\
\hline
\end{tabular}
\medskip
\caption{Table of results: $n$ is the number of vertices,
$m=\Omega(n)$ is the number of edges, $K=O(m^3)$ is
the total number of pairs of crossings on the same edge, $k = O(m^2)$ is the
number of crossings of the input drawing, and $q = O(k)$ is
the number of its \emph{odd face polygons}.}
\label{tbl:results} \vspace{-\baselineskip}
\end{table}

\section{Minimizing switches}\label{sec:switches}

In this section we discuss results related to the {\sc
MinTotalSwitches} and {\sc MinMaxSwitches} problems. We first
discuss some non-algorithmic results giving simple bounds on
the number of switches needed, and recognition algorithms for
graphs needing no switches. As we know little about these
problems for the stacking model, all results stated in this
section will be for the weaving model.

\begin{lemma} Given a drawing $D$ of a graph we can turn $D$ into a cased
drawing without any switches if and only if the edge crossing
graph $\GI$ is bipartite.
\end{lemma}
\begin{cor}Given a drawing $D$ of a graph we can decide in
$O((n+m)\log(n+m))$ time if $D$ can be turned into a cased
drawing without any switches.
\end{cor}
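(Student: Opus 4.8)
The plan is to let the preceding Lemma do the conceptual work: a no-switch cased drawing exists exactly when $\GI$ is bipartite, so it suffices to decide bipartiteness of $\GI$ within the stated bound. Since $m=\Omega(n)$, the target $O((n+m)\log(n+m))$ is really $O(m\log m)$. The obvious route---run a Bentley--Ottmann sweep to report all $k$ crossings, build $\GI$ explicitly, and two-colour it by breadth-first search---costs $O((m+k)\log m)$, which is far too slow because $k$ can be $\Theta(m^2)$ even when the answer is \emph{yes}: a grid of $m/2$ near-horizontal and $m/2$ near-vertical edges has crossing graph $K_{m/2,m/2}$, which is bipartite yet has quadratically many crossings. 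The entire difficulty is therefore to certify bipartiteness without ever touching all $k$ edges of $\GI$.

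The structural fact I would exploit is that $\GI$ is bipartite iff the edges of $D$ split into two \emph{crossing-free} classes, and that inside each connected component of $\GI$ this two-colouring, if it exists, is forced up to swapping the two colours. Moreover every crossing of $D$ joins two edges lying in the same component of $\GI$, so flipping the colours of one component independently of another never creates a monochromatic crossing. This reduces the test to two subtasks: (i) compute the forced per-component two-colouring; and (ii) verify that each colour class really is crossing-free.

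For (i) I would run a plane sweep driving a union--find-with-parity structure over the edges, where the parity bit records ``opposite colour''; I merge two edges only when they are found to cross while still lying in distinct components, i.e.\ only along a spanning forest of $\GI$, which fixes a colour for every edge consistent with its component's unique colouring. For (ii) I would test each colour class with a Shamos--Hoey segment-intersection sweep, modified to report only proper interior crossings and to ignore meetings at shared endpoints (edges of $G$ sharing a vertex meet but do not cross, so $\GI$ has no edge for them); each sweep runs in $O(m\log m)$. If neither class contains a crossing, the forced colouring is proper on every component and $\GI$ is bipartite; conversely any reported monochromatic crossing is a non-forest edge closing an odd cycle, so $\GI$ is not bipartite. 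Correctness hinges precisely on the two structural facts above: crossings stay within components and the colouring per component is unique.

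The main obstacle is subtask (i): extracting a spanning forest of $\GI$---equivalently, its connected components---in near-linear time without enumerating the potentially quadratic number of crossings. The claim I would have to establish carefully is that the sweep can restrict itself to $O(m)$ candidate crossing tests (essentially the sweep-line adjacencies that can change the component structure) and perform only the $O(m)$ merges of a spanning forest, so that geometric connectivity of the segment arrangement is computed in $O(m\log m)$ time; granting that, the verification step (ii) is routine and the overall bound follows.
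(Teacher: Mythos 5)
Your proposal follows essentially the same route as the paper: the paper's entire proof is a citation of Eppstein's SODA 2004 algorithm for testing bipartiteness of geometric intersection graphs (noting only that it never constructs the arrangement, hence no $k$ term in the bound), and that algorithm is internally exactly your scheme --- build a spanning forest of $\GI$, take the forced two-colouring it induces, and verify that each colour class is crossing-free with a Shamos--Hoey sweep. The step you flag as the main obstacle, computing a spanning forest of the segment intersection graph in $O(m\log m)$ time without enumerating the potentially quadratic number of crossings, is precisely the technical content of that cited reference, so you have correctly reconstructed the black box the paper invokes; like you, the paper does not reprove that step, it outsources it.
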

\begin{proof}
We apply the bipartiteness algorithm of \cite{Epp-SODA-04}.
Note that this does not construct the arrangement, so there is
no term with $k$ in the runtime.\qed
\end{proof}
Define a \emph{vertex-free cycle} in a drawing of a graph $G$
to be a face $f$ formed by the arrangement of the edges in the
drawing, such that there are no vertices of $G$ on the boundary
of $f$. An \emph{odd vertex-free cycle} is a vertex-free cycle
composed of an odd number of segments of the arrangement.
\begin{lemma}
\label{lem:odd-cycle-has-one-switch} Let $f$ be an odd
vertex-free cycle in a drawing $D$. Then in any casing of $D$,
there must be a switch on one of the segments of $f$.
\end{lemma}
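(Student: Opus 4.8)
The plan is to prove this by a parity/contradiction argument. Consider an odd vertex-free cycle $f$ with segments $s_1, s_2, \ldots, s_{2\ell+1}$ in cyclic order, where consecutive segments meet at crossings $C_1, C_2, \ldots, C_{2\ell+1}$ (each $C_i$ being the crossing where $s_i$ and $s_{i+1}$ meet, indices mod $2\ell+1$). Suppose for contradiction that there is a casing of $D$ with no switch on any segment of $f$.

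The key observation I would develop concerns what ``no switch'' means locally. Each segment $s_i$ of the face $f$ is a portion of some edge of $G$ lying strictly between two consecutive crossings along that edge, namely $C_{i-1}$ and $C_i$. Since $s_i$ lies on an edge and $C_{i-1}, C_i$ are consecutive crossings along that edge, the ``no switch'' hypothesis forces $C_{i-1}$ and $C_i$ to be of the \emph{same} type for $s_i$: either $s_i$ is on top (a bridge) at both crossings, or on the bottom (a tunnel) at both. So I assign to each segment a bit telling whether it is ``on top'' along its stretch bounding $f$.

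Now I would trace the top/bottom relationship around the cycle. At each crossing $C_i$, the two segments $s_i$ and $s_{i+1}$ cross, and exactly one is on top of the other. If I follow the boundary around $f$ and record at each crossing whether we ``step up'' or ``step down'' relative to the other edge, the no-switch condition ties the status of each segment at its two endpoints together. The main technical step is to set up a consistent orientation so that going around the cycle must flip the top/bottom relation an even number of times to return to the start---but an odd cycle forces an odd number of flips, yielding the contradiction. Concretely, define for each crossing $C_i$ a value in $\{+1,-1\}$ according to whether, walking along $f$ in a fixed direction, the edge we are on passes over ($+1$) or under ($-1$) the crossing edge; the no-switch condition along each $s_i$ equates the leaving-status at one crossing with the arriving-status at the next, and the product of all these local relations around the cycle must be consistent, but the odd number of segments makes it inconsistent.

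The hard part will be making the parity bookkeeping precise, since the roles of the two crossing edges swap as one walks around the face: at crossing $C_i$ the segment $s_i$ is ``our'' edge continuing into $s_{i+1}$, so the over/under bit that describes $s_i$ relative to $s_{i+1}$ is exactly reversed when viewed as the bit describing $s_{i+1}$ relative to $s_i$. I would handle this by carefully fixing a traversal direction of $\partial f$ and defining the state variable so that each crossing contributes a single sign reversal, then observe that the no-switch constraints chain these signs around the loop; closing the loop after an odd number of segments produces $x = -x$ for the state variable, which is impossible. This localizes the entire argument and gives the required switch.
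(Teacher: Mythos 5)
Your proposal is correct and follows essentially the same route as the paper's proof: assign each segment of the odd vertex-free cycle a well-defined over/under status (which exists precisely because no switch occurs on it), observe that the statuses of consecutive segments must be opposite at their shared crossing, and conclude that this forced alternation is inconsistent around a cycle of odd length. Your version merely makes explicit the sign bookkeeping (the relation $\epsilon_{i+1}=-\epsilon_i$ closing to $x=-x$) that the paper's two-sentence argument leaves informal.
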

\begin{proof}
Unless there is a switch, the segments must alternate between
those that cross above the previous segment, and those that
cross below the previous segment. However, this alternation
cannot continue all the way around an odd cycle, for it would
end up in an inconsistent state from how it started.\qed
\end{proof}
\begin{lemma}
Given a drawing $D$ of a graph the minimum number of switches
of any cased drawing obtained from $D$ is at least half of the
number of odd vertex-free cycles in $D$.
\end{lemma}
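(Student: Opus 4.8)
The plan is to combine Lemma~\ref{lem:odd-cycle-has-one-switch} with a charging (double-counting) argument. That lemma already tells us that each odd vertex-free cycle $f$ must contain a switch on one of its boundary segments in any casing of $D$. So if distinct odd cycles were forced to use distinct switches, we would immediately get at least one switch per odd cycle. The reason the bound is only \emph{half} the number of odd cycles is that a single switch can simultaneously satisfy the requirement of Lemma~\ref{lem:odd-cycle-has-one-switch} for two different odd cycles, but no more than two; establishing that ``at most two'' is the heart of the argument.

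First I would make precise where a switch ``lives''. A switch is a pair of consecutive crossings $C_1,C_2$ along an edge $e$, and by definition there is no crossing strictly between them; hence the portion of $e$ joining $C_1$ and $C_2$ is a single segment of the arrangement of $D$. I would associate to each switch this one arrangement segment $\sigma$. The phrase ``a switch on one of the segments of $f$'' in Lemma~\ref{lem:odd-cycle-has-one-switch} then means exactly that $\sigma$ is one of the boundary segments of the face $f$.

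Next I would set up the charging map. For every odd vertex-free cycle $f$, Lemma~\ref{lem:odd-cycle-has-one-switch} lets me choose a switch $s_f$ whose associated segment $\sigma(s_f)$ lies on the boundary of $f$. The crucial observation is that any arrangement segment $\sigma$ bounds at most two faces of the arrangement, so it can appear on the boundary of at most two odd vertex-free cycles. Consequently, for a fixed switch $s$, the number of odd cycles $f$ with $s_f = s$ is at most two, since all such $f$ must have $\sigma(s)$ on their boundary. Thus the map $f \mapsto s_f$ from odd vertex-free cycles to switches is at most two-to-one.

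Finally I would conclude that if $Q$ is the number of odd vertex-free cycles and $S$ the number of switches in the casing, the at-most-two-to-one map gives $Q \le 2S$, i.e.\ $S \ge Q/2$, as claimed. The main obstacle is the geometric step that a switch occupies a single arrangement segment bounding at most two faces; once that is pinned down the counting is routine. I would take a little care with degenerate possibilities (for instance a segment incident to the same face on both sides), but these only make the preimage smaller, so the inequality is preserved.
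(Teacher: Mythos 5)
Your proof is correct and follows essentially the same argument as the paper: both invoke Lemma~\ref{lem:odd-cycle-has-one-switch} to choose a switch on a boundary segment of each odd vertex-free cycle, and both use the fact that an arrangement segment bounds at most two faces to conclude the assignment is at most two-to-one, yielding the factor of one half. Your version merely spells out more explicitly that a switch occupies a single arrangement segment, which the paper leaves implicit.
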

\begin{proof}
Let $o$ be the number of odd vertex-free cycles in $D$. By
Lemma~\ref{lem:odd-cycle-has-one-switch}, each odd vertex-free
cycle must have a switch on one of its segments. Choose one
such switch for each cycle; then each segment belongs to at
most two vertex-free cycles, so these choices group the odd
cycles into pairs of cycles sharing a common switch, together
with possibly some unpaired cycles. The number of pairs and
unpaired cycles must be at least $o/2$, so the number of
switches must also be this large.
\end{proof}
\vspace{-1.5\baselineskip}

\begin{wrapfigure}[9]{r}{.38\textwidth}
  \centering
  \vspace{-\baselineskip}
  \includegraphics{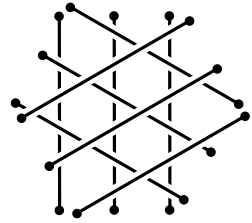}
  \vspace{-.5\baselineskip}
  \small
  \caption{A construction with $O(n)$ edges and $\Omega(n^2)$ triangles.}
  \label{fig:n2triangles}
\end{wrapfigure}
\hspace{2in}
\begin{lemma}
For any $n$ large enough, a drawing of a graph $G$ with $n$
vertices and $O(n)$ edges exists for which any crossing choice
gives rise to $\Omega(n^2)$ switches.
\end{lemma}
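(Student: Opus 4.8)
The plan is to build a drawing, along the lines of Fig.~\ref{fig:n2triangles}, whose arrangement contains $\Omega(n^2)$ triangular faces none of which carries a graph vertex on its boundary, and then to quote the previous lemma: since a triangle is bounded by three segments and three is odd, each such face is an odd vertex-free cycle, so the minimum number of switches of any casing is at least half their number, i.e. $\Omega(n^2)$. Thus the whole argument reduces to producing, with only $O(n)$ edges, a drawing whose interior is densely tiled by vertex-free triangles.

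For the construction I would take three families of $\Theta(n)$ equally spaced parallel lines in three distinct directions (for instance $0$, $\pi/3$, $2\pi/3$). Overlaid, these form a triangular grid whose central region is tiled by $\Theta(n^2)$ small triangular cells. Since the paper forbids three edges meeting in a point, I would first perturb the lines into general position; a sufficiently small perturbation merely replaces each former triple point by a tiny triangle and leaves every grid cell (slightly deformed) intact, so the count of triangular faces stays $\Omega(n^2)$. I would then turn each line into an edge by truncating it to a segment: enclose the central region in a disk, cut every line on a concentric circle of much larger radius, and perturb the endpoints slightly so that no endpoint lies on a foreign segment. All $\Theta(n)$ endpoints---the graph vertices---then sit on the outer circle, far from the central cells, and a global scaling makes consecutive crossings arbitrarily far apart so that casings do not interfere.

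With the endpoints banished to the outer circle, every central triangular cell has only arrangement crossings on its boundary and hence is genuinely vertex-free; truncation deletes only faces near the boundary, so a constant fraction---still $\Omega(n^2)$---of the triangles survives untouched. The drawing uses $\Theta(n)$ segments and $\Theta(n)$ vertices, as required. Applying the previous lemma to these $\Omega(n^2)$ odd vertex-free cycles shows that any crossing choice, in either model, forces $\Omega(n^2)$ switches.

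The counting itself is easy; the real care lies in the perturbation bookkeeping. I must choose the perturbation small enough that it simultaneously achieves general position (no three lines concurrent, no endpoint incident to another segment) and preserves a fixed $\Omega(n^2)$ subfamily of central triangles. The clean way to arrange this is to fix the ideal grid first, single out a concentric subregion containing $\Omega(n^2)$ triangles all bounded away from the grid boundary and from one another, and only afterwards apply a generic perturbation whose magnitude is below the separation of that subregion---guaranteeing that each of its triangles persists.
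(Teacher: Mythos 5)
Your proposal is correct and follows essentially the same route as the paper: three families of $\Theta(n)$ parallel lines yielding $\Omega(n^2)$ vertex-free triangles, each an odd vertex-free cycle, so the earlier lemma forces $\Omega(n^2)$ switches. The paper states this in one sentence; your additional care with perturbation (avoiding triple points) and truncation (keeping endpoints off the triangles) fills in details the paper leaves implicit, but introduces no new ideas.
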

\begin{proof} A construction with three sets of parallel
lines, each of linear size, gives $\Omega(n^2)$ vertex-free
triangles, and each triangle gives at least one switch (see
Fig.~\ref{fig:n2triangles}).\qed
\end{proof}

\begin{lemma}
For any $n$ large enough, a drawing of a graph $G$ with $n$
vertices and $O(n^2)$ edges exists for which any crossing
choice gives rise to $\Omega(n^4)$ switches.
\end{lemma}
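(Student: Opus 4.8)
The plan is to follow exactly the template of the previous two lemmas: I will exhibit a drawing whose arrangement contains $\Omega(n^4)$ vertex-free triangular faces. A triangle is a three-segment, hence odd, vertex-free cycle, so by Lemma~\ref{lem:odd-cycle-has-one-switch} each such face forces a switch, and by the subsequent halving lemma the number of switches in any casing is at least half the number of odd vertex-free faces; thus $\Omega(n^4)$ triangular faces immediately give $\Omega(n^4)$ switches. It is worth noting at the outset that every quantity in the statement is extremal: a graph on $n$ vertices has at most $\Theta(n^2)$ edges, and an arrangement of $\Theta(n^2)$ segments has only $O(n^4)$ crossings and faces. The construction must therefore be essentially tight, in that a constant fraction of its crossings must bound triangular faces.

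The first thing to settle is the vertex budget, and here the previous construction cannot simply be rescaled. A family of $t$ pairwise parallel segments (indeed any pairwise non-crossing family) is a plane graph and so needs $\Omega(t)$ distinct endpoints; three families of $\Theta(n^2)$ parallel lines would cost $\Theta(n^2)$ vertices, not $n$. To spend only $n$ vertices on $\Theta(n^2)$ edges the endpoints must be shared by many edges, which forces the drawing to be highly non-planar. I would therefore use a dense graph with high-degree vertices. The cleanest candidates are the complete bipartite graph $K_{n/2,n/2}$, drawn with its two color classes as $n/2$ points on each of two parallel lines (equivalently, as two tiny clusters placed far apart), or the complete graph $K_n$ drawn in convex position. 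Both use $\Theta(n)$ vertices and $\Theta(n^2)$ edges; in the bipartite drawing every crossing lies strictly between the two lines and is therefore vertex-free, and a short inversion count gives $\Theta(n^4)$ crossings, while convex $K_n$ likewise has $\Theta(n^4)$ crossings in its interior.

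The crux, and the step I expect to be the genuine obstacle, is to prove that $\Omega(n^4)$ of the interior faces are triangles rather than larger polygons. Unlike the triangular grid of the previous lemma, a dense arrangement need not be triangle-rich: an arrangement of $N$ lines can have as few as $N-2$ triangles, and configurations with many parallel classes or near-concurrences tend toward the sparse end, so the richness has to be argued rather than assumed. I would first perturb the points into general position, destroying all triple concurrences and collinearities and keeping every vertex off every non-incident edge as the model requires, without altering the crossing pattern. Then I would argue locally: at a crossing $X$ the four wedges around $X$ are each cut off by the segment that first crosses into them, and I would show that for a constant fraction of the $\Theta(n^4)$ crossings at least one wedge closes into an empty triangle with corner $X$. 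Charging each empty triangle to its at most three corners then yields $\Omega(n^4)$ distinct triangular faces.

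Once the triangle-richness is established, the proof closes exactly as before through Lemma~\ref{lem:odd-cycle-has-one-switch} and the halving argument. The sole risk in this plan is that the chosen dense drawing turns out to be triangle-poor; if the local wedge argument is hard to push through for $K_{n/2,n/2}$, I would fall back on exhibiting a direct, provably empty family of $\Omega(n^4)$ small triangles inside convex $K_n$ (for instance triangles cut out by three chords with nearly consecutive endpoints), which sidesteps any global face-counting and delivers the same conclusion up to constant factors in the number of vertices and edges.
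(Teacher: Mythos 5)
Your construction and overall pipeline match the paper's exactly: the paper draws the complete bipartite graph $K_{n/6,n/6}$ with its color classes equally spaced on the two short sides of a very elongated rectangle, which is your first candidate ($K_{n/2,n/2}$ on two far-apart parallel lines), and it too concludes via vertex-free triangles, Lemma~\ref{lem:odd-cycle-has-one-switch}, and the halving lemma. But the step you yourself flag as the crux is genuinely missing from your proposal, and the plan you offer for it would not close the gap. The ``local wedge argument'' is a restatement of the goal, not a mechanism: asserting that a constant fraction of the $\Theta(n^4)$ crossings have a wedge that closes into an empty triangle is (up to the constant-factor charging you describe) exactly the claim that the arrangement has $\Omega(n^4)$ triangular faces, and you give no geometric reason why this particular drawing should satisfy it --- indeed you correctly observe that dense arrangements can be triangle-poor, which is precisely why some specific structure of the drawing must be invoked. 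Your fallback is quantitatively inadequate as described: triangles cut off by three chords of convex $K_n$ with nearly consecutive endpoints are parameterized by at most two windows of consecutive vertices, so there are only $O(n^2)$ of them, far short of $\Omega(n^4)$; relaxing ``nearly consecutive'' to boost the count reopens the emptiness question you were trying to sidestep.

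The idea you are missing is the one thing the paper's (admittedly terse) proof does supply: it uses the elongation, which you build into your drawing but never exploit. With the left class at points $(0,i)$ and the right class at $(L,j)$, an edge has slope $(j-i)/L$, so the $\Theta(n^2)$ edges decompose into $\Theta(n)$ \emph{exactly parallel} families --- one per value of the index difference $j-i$ --- of up to $\Theta(n)$ edges each, with slightly different slopes across families. The paper then claims that there is a strip parallel to the short sides inside which these edge portions overlap just as the families of parallel lines do in the preceding lemma (Fig.~\ref{fig:n2triangles}), so that lemma's triangle count, quadratic in the number of segments, applies with $\Theta(n^2)$ segments and yields $\Omega(n^4)$ vertex-free triangles. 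To be fair, the paper itself only asserts this strip property (``one can prove that there is a strip\dots''), so its proof is also a sketch at this point; but it reduces the dense drawing to the already-established parallel-family mechanism, and that reduction is exactly the content your proposal lacks.
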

\begin{proof}
We build our graph as follows: make a very elongated
rectangle, place $n/6$ vertices equally spaced on each short
edge, and draw the complete bipartite graph. This graph has
$(n/6)^2$ edges. One can prove that there is a strip parallel
to the short side of the rectangle, such that the parts of the
edges inside the strip behave in the same way as parallel ones
do with respect to creating triangles when overlapped the way
it is described in the previous lemma. This gives us the
desired graph with $\Omega(n^4)$ triangles, and hence with
$\Omega(n^4)$ switches.\qed
\end{proof}

\bigskip\noindent
We define a \emph{degree-one graph} to be a graph in which
every vertex is incident to exactly one edge; that is, it must
consist of a collection of disconnected edges.

\begin{lemma}
\label{lem:degree-one-equivalent} Let $D$ be a drawing of a
graph $G$. Then there exists a drawing $D'$ of a degree-one
graph $G'$, such that the edges of $D$ correspond one-for-one
with the edges of $D'$, casings of $D$ correspond one-for-one
to casings of $D'$, and switches of $D$ correspond one-for-one
with switches of $D'$.
\end{lemma}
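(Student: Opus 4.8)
The plan is to exploit the fact that the entire combinatorial structure relevant to casing---the set of edges viewed as curves, their pairwise crossings, and the linear order in which the crossings appear along each edge---depends only on the edges and their intersections, and not at all on which edges happen to share an endpoint. A switch on an edge $e$ is determined by a consecutive pair of crossings along $e$ together with the bridge/tunnel choice at each, and none of this data references the vertices of $G$. Consequently, if I can modify $D$ so that no two edges share an endpoint while leaving every crossing (and its position in the order along each edge) untouched, the resulting drawing $D'$ will be a drawing of a degree-one graph $G'$ with all three required one-for-one correspondences.

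Concretely, I would build $D'$ from $D$ by \emph{trimming}: for each edge $e$ of $D$, retract each of its two endpoints a short distance $\delta_e$ inward along $e$, replacing $e$ by a slightly shorter subcurve $e'$. The graph $G'$ whose edges are the curves $e'$ then has every vertex of degree one, since no endpoint of any $e'$ is shared. The parameter $\delta_e$ must be chosen small enough that two things hold: first, no crossing point of $D$ lies within the trimmed portion of any edge; and second, the retracted endpoints are pairwise distinct. The first is possible because every crossing is an interior point of both edges it belongs to (vertices never lie on an edge except as one of its endpoints, so a crossing can never sit at an endpoint), whence the finitely many crossings are all at strictly positive distance from the endpoints, and I may take each $\delta_e$ below this minimum distance. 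The second is arranged by choosing the $\delta_e$ generically (or simply pairwise distinct) so as to break any coincidence among the retracted endpoints.

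It then remains to verify the correspondences. The edges of $D$ biject with those of $D'$ by construction, via $e \leftrightarrow e'$. For crossings, note that each $e'$ is a subset of $e$, so no pair of trimmed edges can cross at a point where the originals did not; thus trimming creates no new crossing. Conversely, by the choice of $\delta_e$ no crossing is destroyed, and since trimming only removes a prefix and a suffix of each curve, the surviving crossings keep their relative order along every edge. Hence the crossings of $D'$ are exactly those of $D$, in the same order along each edge. This identification of crossings immediately yields a bijection between casings of $D$ and casings of $D'$, and---because the ordered sequence of crossings along $e'$ equals that along $e$ while the bridge/tunnel labels are preserved---a bijection between switches as well.

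The one delicate point, which I would treat most carefully, concerns the points where edges formerly met at a shared vertex. Two edges $e_1,e_2$ incident to a vertex $v$ touch at $v$, but this is not a crossing; after trimming, neither $e_1'$ nor $e_2'$ contains $v$, and since $v$ is their only common point in a neighborhood of $v$, the trimmed curves are disjoint there and produce no spurious crossing. Confirming that this local pulling-apart neither manufactures new crossings nor reorders the surviving ones is exactly what the subset property together with the bound on $\delta_e$ guarantees, which completes the argument.
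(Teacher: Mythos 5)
Your proposal is correct and follows essentially the same approach as the paper: the paper also trims each edge to a slightly shorter subcurve near its endpoints (using a small circle around each vertex to determine the trim points) and observes that, since no crossings lie in the trimmed portions, crossings, casings, and switches all correspond one-for-one. Your extra care about choosing the $\delta_e$ to keep retracted endpoints distinct is a minor variant of the paper's circle construction, which handles that point automatically.
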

\begin{proof}
Form $G'$ by placing a small circle around each vertex of $G$.
Given an edge $e=(u,v)$ in $G$, let $u_e$
be the point where $e$ crosses the circle around $u$ and similarly let $v_e$ be the
point where $e$ crosses the circle around $v$. Form $D'$ and
$G'$ by replacing each edge $e=(u,v)$ in $G$ by the corresponding
edge $(u_e,v_e)$, drawn as the subset of edge $e$ connecting
those points.

As these replacements do not occur between any two crossings
along any edge, they do not affect the switches on the edge.
Both drawings have the same set of crossings, and any switch in
a casing of one drawing gives rise to a switch in the
corresponding casing of the other drawing.\qed
\end{proof}
In a drawing of a degree-one graph, define a \emph{polygon} to
be a sequence of segments of the arrangement formed by the
drawing edges that forms the boundary of a simple polygon in
the plane. Define a \emph{face polygon} to be a polygon that
forms the boundary of the closure of a face of the arrangement;
note that there may be edges drawn in the interior of this
polygon, as long as they do not separate it into multiple
components.
\begin{lemma}
In a drawing of a degree-one graph, there can be no vertex on
any segment of a polygon.
\end{lemma}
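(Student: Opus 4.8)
The plan is to derive a contradiction from the degree-one hypothesis by counting, at a putative vertex on the polygon, how many segments of the arrangement can possibly meet there. Two observations drive the argument: a vertex of a degree-one graph is incident to only a single arrangement segment, whereas every corner of a closed polygon must have two of its segments meeting at it.

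First I would pin down what it means for a vertex $v$ to lie \emph{on} a segment of the polygon. Recall that the segments of the arrangement are the maximal pieces into which the crossings subdivide each drawing edge, so the endpoints of these segments are exactly the crossings and the vertices of $G'$, and no vertex of $G'$ lies in the relative interior of any segment. Hence if $v$ lay on a polygon segment $s$, then $v$ would have to be an endpoint of $s$, i.e., a corner of the polygon. Next I would count the segments incident to $v$: since $G'$ has degree one, $v$ is the endpoint of exactly one edge $e$, so the only arrangement segment touching $v$ is the terminal piece of $e$; and by the general-position assumption (no vertex lies on an edge unless it is an endpoint of it), no other edge passes through $v$. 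Thus exactly one segment of the arrangement is incident to $v$. Finally I would combine these: a polygon is a closed simple polygon whose boundary is the cyclic concatenation of its segments, so each corner is shared by two consecutive segments, forcing two arrangement segments to meet there. A corner at $v$ would then require two incident segments, contradicting the count of one.

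The subtle point to handle carefully is the passage from ``$v$ is on a segment'' to ``$v$ is a corner with two incident polygon segments.'' I would argue that the corners of the polygon are arrangement vertices (crossings or vertices of $G'$), because the boundary follows straight arrangement segments and can only change direction where two such segments meet; being on a segment as a non-interior point therefore makes $v$ a corner. I would also rule out the degenerate possibility that the traversal enters and leaves $v$ along the \emph{same} single segment, since such backtracking would violate the simplicity of the polygon. This leaves the clean contradiction between the one segment available at a degree-one vertex and the two segments any genuine corner demands.
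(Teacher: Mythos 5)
Your proof is correct and follows essentially the same route as the paper's: both arguments use the general-position assumption to force the vertex to be an endpoint of its (unique) edge, and then observe that the polygon boundary cannot continue past such a point --- your counting of one available segment versus the two required at a corner is just a more detailed rendering of the paper's remark that the segment ``cannot continue past the endpoint to form the boundary of a polygon.''
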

\begin{proof}
We have already required that no vertex can lie on an edge
unless it is the endpoint of an edge. And, if a segment
contains the endpoint of an edge, it cannot continue past the
endpoint to form the boundary of a polygon.\qed
\end{proof}

\begin{wrapfigure}[8]{r}{.45\textwidth}
  \centering
  \vspace{-1.75\baselineskip}
  \includegraphics{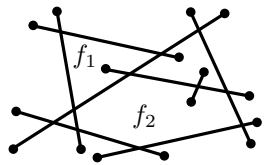}
  \vspace{-.5\baselineskip}
  \small
  \caption{A degree-one graph, $f_1$ is an odd polygon and $f_2$ is an even polygon.}
  \label{fig:polygon}
\end{wrapfigure}
\noindent
Note, however, that a polygon can contain vertices in its
interior. Define the \emph{complexity} of a polygon to be the
number of segments forming it, plus the number of graph
vertices interior to the polygon. We say that a polygon is
\emph{odd} if its complexity is an odd number, and \emph{even}
if its complexity is an even number (see Fig.~\ref{fig:polygon}).

\begin{lemma}
\label{lem:polygon-is-sum-of-faces} Let $p$ be a polygon in a
drawing of a degree-one graph. Then, modulo two, the complexity
of $p$ is equal to the sum of the complexities of the face
polygons of faces within $p$.
\end{lemma}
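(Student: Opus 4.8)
The plan is to split the complexity of a polygon into its two additive parts and transfer each part to the faces inside. Write $c(p)=s(p)+v(p)$, where $s(p)$ is the number of segments forming the boundary of $p$ and $v(p)$ is the number of graph vertices lying strictly inside $p$, and define $s(f)$ and $v(f)$ the same way for each face polygon $f$. Let $f_1,\dots,f_r$ be the faces of the arrangement contained in the interior of $p$. I would then prove the two relations $v(p)=\sum_i v(f_i)$ (exactly) and $s(p)\equiv\sum_i s(f_i)\pmod 2$, and add them modulo two to obtain the lemma.

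For the vertex count I would show the identity holds on the nose. Every graph vertex is the free endpoint of a degree-one edge, so in the arrangement it is the tip of a spike and lies in the interior of a single face, on no segment. By the preceding lemma no vertex lies on a segment of a polygon, and the corners of $p$ are crossings rather than endpoints, so no vertex lies on the boundary of $p$. Thus each vertex counted by $v(p)$ is interior to exactly one $f_i$, and every vertex counted by some $v(f_i)$ is interior to $p$; this bijection yields $v(p)=\sum_i v(f_i)$.

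For the segment count I would double count the arrangement segments lying in the closed region bounded by $p$, grading each by how many of the faces $f_i$ carry it on their (simple-polygon) boundary. A boundary segment of $p$ has the interior of $p$ on exactly one side, so it borders a single $f_i$ and contributes $1$. An interior segment separating two distinct faces contributes $2$. An interior segment with the same face on both sides---in particular an end piece of an edge terminating at an interior vertex---lies in the interior of the closure of that face, not on its boundary polygon, and contributes $0$. Hence $\sum_i s(f_i)=s(p)+2t$, where $t$ is the number of interior separating segments, so $\sum_i s(f_i)\equiv s(p)\pmod 2$. One point to verify is that no segment is split by the boundary of $p$: any crossing of $p$'s boundary is an arrangement crossing and hence a corner of $p$, so each segment is wholly inside, wholly outside, or a boundary segment of $p$.

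The main obstacle I expect is the careful bookkeeping around the ``edges drawn in the interior'' permitted by the definition of a face polygon, i.e.\ the spikes and bridges created by edge endpoints. I must confirm that such an interior segment contributes $0$ to the segment sum---because the face polygon is the simple outer boundary and the segment sits in the interior of the face's closure---while its free endpoint is simultaneously absorbed into the interior-vertex count of that same face. Making these two conventions line up is exactly what lets the two congruences combine cleanly; the remainder is the standard parity argument that interior edges are counted twice and boundary edges once.
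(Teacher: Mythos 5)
Your proposal is correct and follows essentially the same double-counting argument as the paper: boundary segments of $p$ contribute one to each side, interior vertices contribute one to each side (assigned to the face polygon containing them), and interior segments contribute either two (separating) or zero (non-separating, e.g.\ spikes) to the face sum. The only difference is organizational---you split the count into a vertex identity and a segment congruence and then add them, while the paper handles all contributions in a single pass---and your extra care about spikes and about segments not being split by $p$'s boundary just makes explicit details the paper leaves implicit.
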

\begin{proof}
Each segment of $p$ contributes one to the complexity of $p$
and one to the complexity of some face polygon. Each vertex
within $p$ contributes one to the complexity of $p$ and one to
the complexity of the face that contains it. Each segment
within the interior of $p$ either separates two faces, and
contributes two to the total complexity of faces within $p$, or
does not separate any face and contributes nothing to the
complexity. Thus in each case the contribution to $p$ and to
the sum of its faces is the same modulo two.\qed
\end{proof}
\begin{lemma}
\label{lem:odd-polygon-is-face} Let $p$ be an odd polygon in a
drawing of a degree-one graph. Then there exists an odd face
polygon in the same drawing.
\end{lemma}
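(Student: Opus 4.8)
The plan is to obtain the statement immediately from Lemma~\ref{lem:polygon-is-sum-of-faces}. That lemma asserts that, modulo two, the complexity of $p$ equals the sum of the complexities of the face polygons of the faces lying within $p$. Since $p$ is odd, this sum is odd, so at least one summand must be odd; the face polygon realizing that odd summand is then an odd face polygon in the same drawing, which is exactly what we want. Phrased contrapositively: an odd polygon cannot be \emph{tiled} purely by even face polygons, because their complexities would then sum to an even number.

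The one point that needs care, and which I expect to be the main obstacle, is the tacit assumption that every face lying within $p$ genuinely \emph{has} a face polygon. A face whose closure is a simple polygon---possibly carrying non-separating pendant edges in its interior---does have one, and for such faces the counting of Lemma~\ref{lem:polygon-is-sum-of-faces} applies verbatim. The trouble is a multiply-connected face, i.e.\ a face with a hole, whose boundary consists of an outer polygon together with one or more interior polygons: such a face is not bounded by a single simple polygon and so contributes no clean term to the sum. To close this gap I would argue by strong induction on the complexity of $p$. If $p$ is itself a face polygon we are done, since $p$ is odd. Otherwise the open region bounded by $p$ is subdivided by arrangement segments in its interior, and I split into two cases.

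In the first case some chain of arrangement segments has both endpoints among the vertices of $p$ and its interior inside $p$. Cutting $p$ along this chain yields two polygons $p_1$ and $p_2$; the chain's segments are counted in both pieces (an even contribution), its interior vertices are crossing points and hence do not affect complexity, and the interior graph vertices of $p$ are partitioned between the two pieces. Thus complexity$(p_1)$ and complexity$(p_2)$ sum to complexity$(p)$ modulo two, so one piece is odd and of strictly smaller complexity, and the inductive hypothesis applies to it. The remaining and hardest case is when no such chain exists, so the subdivision is forced by an interior polygon $q$ (a hole), leaving a multiply-connected face between $p$ and $q$. Here I would recurse on an innermost such interior polygon $q$, whose complexity is strictly smaller than that of $p$; a parity accounting---applying Lemma~\ref{lem:polygon-is-sum-of-faces} inside $q$ and bookkeeping the graph vertices that lie in the annular face between $p$ and $q$---shows that an odd polygon of strictly smaller complexity can always be extracted. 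Verifying this bookkeeping, namely that a hole can never permit an odd polygon to enclose only even face polygons, is the crux of the argument; once it is established, the induction descends to a genuine odd face polygon.
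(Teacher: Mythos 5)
Your first paragraph is exactly the paper's own proof of this lemma: apply Lemma~\ref{lem:polygon-is-sum-of-faces}, note the sum is odd, and conclude that some face polygon within $p$ must be odd. The paper stops there. The worry you then raise---that a multiply-connected face (an annular face) within $p$ has no face polygon under the stated definition, so the sum in Lemma~\ref{lem:polygon-is-sum-of-faces} silently drops boundary pieces---is a legitimate one that the paper glosses over. So the issue is not whether your concern is real; it is whether your proposed repair closes it. It does not.

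The crux claim of your second case, that ``a hole can never permit an odd polygon to enclose only even face polygons,'' is false, and this breaks the entire inward induction. Consider a drawing of a degree-one graph consisting of five segments forming a convex pentagon (each segment protruding slightly past the two crossings, so ten endpoints sit just outside the corners), together with four segments forming a small quadrilateral drawn strictly inside the pentagon and disjoint from it. The pentagon $p$ has complexity $5+8=13$ (five sides plus the eight endpoints of the quadrilateral's segments), hence is odd. The faces within $p$ are the quadrilateral's interior, whose face polygon has complexity $4$ (even), and the annular face between the two curves, whose closure has disconnected boundary and hence no face polygon at all. So here is an odd polygon that encloses only even face polygons; no descent into strictly smaller polygons inside $p$ can ever terminate at an odd face polygon, because none exists inside $p$. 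The witness that saves the lemma lies \emph{outside}: the boundary of the closure of the unbounded face is exactly $p$, so $p$ is itself an odd face polygon. (Equivalently, the clean fix is to read ``face polygon'' as any polygon that is a connected component of the boundary of the closure of a face---then the annular face has boundary polygons $p$ and the quadrilateral, and the parity count of Lemma~\ref{lem:polygon-is-sum-of-faces} is restored; with three nested groups of segments this reading is genuinely needed, not optional, since then no face has $p$ as its full closure boundary.) Your induction, which by construction only ever produces polygons of strictly smaller complexity strictly inside $p$, can never reach this witness, so the ``hardest case'' you flag is not merely unverified but unprovable as stated. A lesser flaw in the same spirit: in your first case, cutting $p$ along a chain can \emph{increase} complexity (the chain may have many segments), so even there the induction should be on the number of arrangement faces within the polygon, not on complexity.
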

\begin{proof}
By Lemma~\ref{lem:polygon-is-sum-of-faces}, the complexity of
$p$ has the same parity as the sum of the complexities of its
faces. Therefore, if $p$ is odd, it has an odd number of odd
faces, and in particular there must be a nonzero number of odd
faces.\qed
\end{proof}
\begin{lemma}
\label{lem:no-switches-from-no-odd-face} Let $D$ be a drawing
of a degree-one graph. Then $D$ has a casing with no switches
if and only if it has no odd face polygon.
\end{lemma}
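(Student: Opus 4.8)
The plan is to characterize switch-free casings through the uniform over/under labelling they induce, and then to match, face by face, the parity of that labelling against the complexity of the face. A casing has no switch on an edge $e$ exactly when all crossings along $e$ are of the same type, so a switch-free casing is precisely an assignment of a single label ``over'' or ``under'' to every edge such that the two edges at each crossing get opposite labels; this is the proper two-colouring of $\GI$ furnished by the bipartiteness characterization of switch-free casings. The bridge between this algebraic condition and face polygons will be a single parity fact: for any simple polygon $p$ whose sides are arrangement segments, the number of transversal crossings of $p$ by other edges (points where the boundary runs straight through a crossing) is congruent modulo two to the number of graph vertices interior to $p$. This follows from the Jordan curve theorem applied edge by edge: a segment meets $\partial p$ an even or odd number of times according to whether its two endpoints lie on the same side of $p$, and in a degree-one graph each interior vertex is the unique endpoint of one edge. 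This is the same modulo-two accounting used in Lemma~\ref{lem:polygon-is-sum-of-faces}, and it is what explains why the correct invariant is the complexity rather than the segment count alone.

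For the ``only if'' direction I will prove the contrapositive that a switch-free casing forces every face polygon to be even. Fix such a casing and walk once around the boundary of a face polygon $p$. At a corner where two distinct edges cross (a \emph{turn}) the uniform labels of the two edges are opposite, so the over/under state flips; where the boundary continues straight along one edge through a crossing with an edge entering the interior, the state is unchanged. Since the walk returns to its starting state (the alternation argument of Lemma~\ref{lem:odd-cycle-has-one-switch}), the number of turns is even. Writing the number of boundary segments as $(\text{turns})+(\text{straight-through crossings})$ and adding the interior vertices, the parity fact above shows the complexity of $p$ has the same parity as the number of turns, hence is even. So no odd face polygon can exist.

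For the ``if'' direction I again argue the contrapositive: if no switch-free casing exists then $\GI$ is not bipartite, so it contains an odd cycle $e_1,\dots,e_{2t+1}$. Joining the consecutive crossing points $x_i=e_i\cap e_{i+1}$ along the shared edges yields a closed polygonal chain $\gamma$ whose only turns are the $2t+1$ points $x_i$, every other crossing along $\gamma$ being a straight-through. If $\gamma$ is simple, the same parity computation gives complexity $\equiv 2t+1$, so $\gamma$ is an odd polygon and Lemma~\ref{lem:odd-polygon-is-face} produces an odd face polygon. To drop the simplicity assumption I will repeatedly shortcut $\gamma$ at a self-crossing, splitting it into two shorter closed chains; each split creates one new turn in each piece, changing the total turn count by an even amount, so the odd parity is inherited by at least one piece. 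Iterating terminates with a simple closed chain that still has an odd number of turns, which is the desired odd polygon.

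I expect the self-intersection handling in the ``if'' direction to be the main obstacle: one must verify that shortcutting preserves the property that the designated $x_i$ are the only non-straight-through corners, that each split conserves the parity of the turn count (the two new corners at a self-crossing are genuine turns because they join sub-segments of two distinct edges), and that the recursion terminates at a genuinely simple polygon to which Lemma~\ref{lem:odd-polygon-is-face} applies. The ``only if'' direction is comparatively routine once the pairing of interior vertices with transversal crossings is in hand.
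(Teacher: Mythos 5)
Your proposal is correct and follows essentially the same route as the paper's proof: both reduce switch-freeness to bipartiteness of the crossing graph $\GI$, identify odd cycles of $\GI$ with odd polygons via a Jordan-curve parity count of boundary crossings against interior vertices, and then invoke Lemma~\ref{lem:odd-polygon-is-face} to pass to an odd \emph{face} polygon. The differences are only ones of detail, in your favor: you supply arguments for the two steps the paper asserts without proof, namely the converse direction (the paper only says an odd face polygon ``can be shown'' to yield an odd cycle, which you replace by the direct walk with label flips at turns) and the fact that the closed chain traced by an odd cycle of $\GI$ need not be simple (your shortcutting at self-crossings, which preserves the odd turn count since each split adds exactly two turns).
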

\begin{proof}
As we have seen, $D$ has a casing with no switches if and only
if the edge crossing graph is bipartite. This graph is
bipartite if and only if it has no odd cycles, and an odd cycle
in the edge crossing graph corresponds to an odd polygon in
$D$. For, if $C$ is an odd cycle in the edge crossing graph, it
must lie on a polygon $p$ of $D$. Each crossing in $C$
contributes one to the complexity of this polygon. Each edge of
$D$ that crosses $p$ without belonging to $C$ either crosses it
an even number of times (contributing that number of additional
segments to the complexity of $p$) and has both endpoints
inside $p$ or both outside $p$, or it crosses an odd number of
times and has one endpoint inside $p$; thus, it contributes an
even amount to the complexity of $p$. Thus, $p$ must be an odd
polygon. By Lemma~\ref{lem:odd-polygon-is-face}, there is an
odd face polygon in $D$. Conversely, any odd face polygon in
$D$ can be shown to form an odd cycle in the edge crossing
graph.\qed
\end{proof}
%
\begin{theorem}
{\sc MinTotalSwitches} in the weaving model can be solved in
time $O(qk+q^{5/2}\log^{3/2} k)$, where $k$ denotes the number
of crossings in the input drawing and $q$ denotes the number of
its odd face polygons.
\end{theorem}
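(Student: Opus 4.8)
The plan is to reduce {\sc MinTotalSwitches} to a minimum-weight $T$-join in a planar graph, and then to a minimum-weight perfect matching. First I would invoke Lemma~\ref{lem:degree-one-equivalent} to assume that $G$ is a degree-one graph, so that the switch structure is governed entirely by the face polygons of the arrangement. I would then model a casing algebraically: assign to each crossing $C$ a bit $x_C\in\{0,1\}$ recording its over/under orientation, so that the segment joining two consecutive crossings $C,C'$ of an edge carries a switch precisely when $x_C\oplus x_{C'}$ differs from a fixed parity $b_s$ determined by the local labelling. Minimizing the number of switches then becomes the problem of choosing $\{x_C\}$ to minimize the number of violated parity constraints over the edges of the arrangement graph $A$ (whose vertices are the crossings and whose edges are the internal segments), which is exactly the planar ``ground-state'' / frustration-minimization problem.

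The core of the argument is the duality between this frustration problem and a $T$-join in the dual graph $A^{*}$. I would prove that a set $S$ of segments is the switch-set of some casing if and only if $S$ is a $T$-join of $A^{*}$ whose terminal set $T$ is exactly the set of odd face polygons. The forward direction is a parity count around each face: walking the boundary of a face $f$, the number of switches on its boundary has the same parity as the complexity of $f$, so every odd face is incident to an odd number of switches. This is the quantitative form of Lemma~\ref{lem:odd-cycle-has-one-switch}, and Lemmas~\ref{lem:polygon-is-sum-of-faces}--\ref{lem:no-switches-from-no-odd-face} supply precisely the bookkeeping that identifies the odd faces (by complexity) with the frustrated faces. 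For the converse, that any such $T$-join is realizable by an actual casing, I would propagate the bits $x_C$: because the $T$-join condition makes the parity constraints consistent around every cycle of $A^{*}$, the bits can be assigned so that the violated constraints form exactly $S$, in direct analogy with the switch-free case ($T=\emptyset$) of Lemma~\ref{lem:no-switches-from-no-odd-face}. Since each segment borders two faces, $|T|=q$ is even and a $T$-join, hence a perfect matching of $T$, exists.

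Having reduced to a minimum-weight $T$-join with unit edge weights and terminal set the $q$ odd faces, I would solve it as a minimum-weight perfect matching on the complete graph whose vertices are the $q$ odd faces and whose edge weights are shortest-path distances in $A^{*}$, since an optimal $T$-join pairs up the terminals and joins each pair by a shortest dual path. Computing these distances takes $O(qk)$ time, by running one breadth-first search from each of the $q$ terminals in the $O(k)$-size dual graph (unit weights make BFS sufficient, and distances are $O(k)$ since there are $O(k)$ segments). The remaining minimum-weight perfect matching on $q$ vertices with integer weights bounded by $O(k)$ is solved by a fast scaling matching algorithm in $O(q^{5/2}\log^{3/2}k)$ time, which dominates and yields the claimed bound.

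The step I expect to be the main obstacle is the exact equivalence in the second paragraph---in particular, making the parity bookkeeping around faces airtight when faces carry interior vertices and pendant (dangling) segments, so that the terminal set is genuinely the odd face polygons as defined by complexity rather than merely by boundary length, and proving the converse realizability through a clean propagation argument. A secondary technical point is the correct treatment of the unbounded face and the verification that the matching routine achieves the $O(q^{5/2}\log^{3/2}k)$ running time on this structured, integer-weighted instance.
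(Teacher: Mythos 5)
Your proposal is correct, and it arrives at exactly the paper's algorithm: reduce {\sc MinTotalSwitches} to a minimum-weight perfect matching on the $q$ odd face polygons, with edge weights given by shortest-path distances in the arrangement dual, computed by $q$ breadth-first searches in $O(qk)$ time, followed by the Gabow--Tarjan scaling matching algorithm in $O(q^{5/2}\log^{3/2}k)$ time. The genuine difference is in how the equivalence between optimal casings and optimal matchings is established. The paper never sets up the $T$-join/frustration machinery explicitly (it cites via minimization only as motivation); instead it argues by inserting small breaks into segments: given a matching, it breaks the segments along each matching path, observes that the broken drawing has no odd face polygons, cases it switch-free by Lemma~\ref{lem:no-switches-from-no-odd-face}, and reconnects the breaks at a cost of at most one switch per break; conversely, given an optimal casing, it breaks every segment carrying a switch, concludes via Lemma~\ref{lem:no-switches-from-no-odd-face} that the broken drawing has no odd face polygons, and uses Lemma~\ref{lem:odd-polygon-is-face} to pair up the odd faces of $D$ within the even groups so formed. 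This black-box reuse of the switch-free characterization on \emph{modified} drawings sidesteps exactly what you flag as your main obstacle: the quantitative per-face parity claim (number of switches on a face boundary has the parity of its complexity) and the realizability converse. Your claims are nevertheless true: the parity claim follows from a degree count --- the part of the drawing interior to a face polygon has odd-degree endpoints only at interior graph vertices and at boundary crossings where the polygon passes straight through, so these two counts agree modulo two, and straight passes are precisely the polygon corners that are not turns --- which is the same bookkeeping as Lemma~\ref{lem:polygon-is-sum-of-faces}; the converse is the standard planar cycle-space propagation, with the unbounded face included among the terminals. So your route is a correct, more explicit rendering of the duality the paper uses implicitly; what the paper's break-insertion argument buys is that it needs only the qualitative Lemmas~\ref{lem:odd-polygon-is-face} and~\ref{lem:no-switches-from-no-odd-face} already proved, at the price of giving only the optimal value rather than your sharper exact characterization of which segment sets can occur as switch sets.
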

\begin{proof}
Let $D$ be the drawing which we wish to case for the minimum
number of switches. By Lemma~\ref{lem:degree-one-equivalent},
we may assume without loss of generality that each vertex of
$D$ has degree one.

We apply a solution technique related to the Chinese Postman
problem, and also to the problem of via minimization in VLSI
design~\cite{CheKajCha-TCS-83}: form an auxiliary graph $G^o$,
and include in $G^o$ a single vertex for each odd face polygon in
$D$. Also include in $G^o$ an edge connecting each pair of
vertices, and label this edge by the number of segments of the
drawing that are crossed in a path connecting the corresponding
two faces in $D$ that crosses as few segments as possible. We
claim that the minimum weight of a perfect matching in $G^o$
equals the minimum total number of switches in any casing of
$D$.

In one direction, we can case $D$ with a number of switches
equal to or better than the weight of the matching, as follows:
for each edge of the matching, insert a small break into each
of the segments in the path corresponding to the edge. The
resulting broken arrangement has no odd face cycles, for the
breaks connect pairs of odd face cycles in $D$ to form larger
even cycles. Therefore, by
Lemma~\ref{lem:no-switches-from-no-odd-face}, we can case the
drawing with the breaks, without any switches. Forming a
drawing of $D$ by reconnecting all the break points adds at
most one switch per break point, so the total number of
switches equals at most the weight of the perfect matching.

In the other direction, suppose that we have a casing of $D$
with a minimum number of switches; we must show that there
exists an equally good matching in $G^o$. To show this,
consider the drawing formed by inserting a small break
in each segment of $D$ having a switch. This eliminates all switches
in the drawing, so by
Lemma~\ref{lem:no-switches-from-no-odd-face}, the modified
drawing has no odd face polygons. Consider any face polygon in
the modified drawing;  by Lemma~\ref{lem:odd-polygon-is-face}
it must include an even number of odd faces in the original
drawing. Thus, the odd faces of $D$ are connected in
groups of evenly many faces in the modified drawing, and within
each such group we can connect the odd faces in pairs by paths
of breaks in the drawing, giving a matching in $G^o$ with total
weight at most equal to the number of switches in~$D$.

The number of vertices of the graph $G^o$ is $O(q)$, where $q$
is the number of odd face polygons in $D$. We can construct
$G^o$ in time $O(qk)$ where $k$ is the number of crossings in
$D$ by using breadth-first search in the arrangement dual to
$D$ to find the distances from each vertex to all other
vertices. A minimum weight perfect matching in a complete
weighted graph with integer weights bounded by $k$ can be found
in time $O(q^{5/2}\log^{3/2} k)$ using the algorithm of Gabow
and Tarjan \cite{gabow:91}. Therefore the time for this
algorithm is $O(qk+q^{5/2}\log^{3/2} k)$.\qed
\end{proof}
%
%

\section{Minimizing tunnels}\label{sec:tunnels}

In this section we present three algorithms that solve {\sc
MinMaxTunnels}, {\sc MinMaxTunnelLength}, and {\sc
MaxMinTunnelDistance} in the stacking model. We also present
algorithms for {\sc MinMaxTunnels} and {\sc
MaxMinTunnelDistance} in the weaving model. {\sc
MinMaxTunnelLength} is NP-hard in the weaving model.

\subsection{Stacking model}

In the stacking model, some edge $e$ has to be bottommost. This
immediately gives the number of tunnels of $e$, the total
length of tunnels of $e$, and the shortest distance between two
tunnels of $e$. The idea of the algorithm is to determine for
each edge what its value would be if it were bottommost, and
then choose the edge that is best for the optimization to be
bottommost (smallest value for {\sc MinMaxTunnels} and {\sc
MinMaxTunnelLength}, and largest value for {\sc
MaxMinTunnelDistance}). The other $m-1$ edges are stacked
iteratively above this edge. It is easy to see that such an
approach indeed maximizes the minimum, or minimizes the
maximum. We next give an efficient implementation of the
approach. The idea is to maintain the values of all not yet
selected edges under consecutive selections of bottommost edges
instead of recomputing it.

We start by computing the arrangement of edges in $O(m\log
m+k)$ expected time, for instance using Mulmuley's
algorithm~\cite{mul94}. This allows us to determine the value
for all edges in $O(k)$ additional time.

For {\sc MinMaxTunnels} and {\sc MinMaxTunnelLength}, we keep
all edges in a Fibonnacci heap on this value. One selection
involves an {\sc extract-min}, giving an edge $e$, and
traversing $e$ in the arrangement to find all edges it crosses.
For these edges we update the value and perform a {\sc
decrease-key} operation on the Fibonnacci heap. For {\sc
MinMaxTunnels} we decrease the value by one and for {\sc
MinMaxTunnelLength} we decrease by the length of the crossing,
which is ${\mathit{casing width}}/\sin\alpha$, where $\alpha$
is the angle the crossing edges make. For {\sc MinMaxTunnels}
and {\sc MinMaxTunnelLength} this is all that we need. We
perform $m$ {\sc extract-min} and $k$ {\sc decrease-key}
operations. The total traversal time along the edges throughout
the whole algorithm is $O(k)$. Thus, the algorithm runs in
$O(m\log m+k)$ expected time.

For {\sc MaxMinTunnelDistance} we use a Fibonnacci heap that
allows {\sc ex\-tract-max} and {\sc increase-key}. For the
selected edge we again traverse the arrangement to update the
values of the crossing edges. However, we cannot update the
value of an edge in constant time for this optimization. We
maintain a data structure for each edge that maintains the
minimum tunnel distance in $O(\log m)$ time under updates. The
structure is an augmented balanced binary search tree that
stores the edge parts in between consecutive crossings in its
leaves. Each leaf stores the distance between these crossings.
Each internal node is augmented such that it stores the minimum
distance for the subtree in a variable. The root stores the
minimum distance of the edge if it were the bottommost one of
the remaining edges. An update involves merging two adjacent
leaves of the tree and computing the distance between two
crossings. Augmentation allows us to have the new minimum in
the root of the tree in $O(\log m)$ time per update. In total this takes $O(m\log m+ k\log m)$ expected time.
\begin{theorem}
Given a straight-line drawing of a graph with $n$ vertices,
$m=\Omega(n)$ edges, and $k$ edge crossings, we can solve {\sc
MinMaxTunnels} and {\sc MinMaxTunnelLength} in $O(m\log m+k)$
expected time and {\sc MaxMinTunnelDistance} in $O(m\log
m+k\log m)$ expected time in the stacking model.
\end{theorem}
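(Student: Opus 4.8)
The plan is to prove the three bounds essentially by reading off the algorithmic descriptions already given in the running text immediately preceding the theorem, so the proof is mostly a matter of organizing and verifying the claimed running times. The overarching idea for all three problems in the stacking model is the same greedy scheme: in any stacking there is a bottommost edge, and that edge's cost (number of tunnels, total tunnel length, or minimum inter-tunnel distance) is completely determined; so we repeatedly select as bottommost the edge whose current cost is most favorable, remove it, and recurse on the remaining arrangement. I would first justify the correctness of this greedy order, arguing that placing the extremal edge at the bottom is always optimal for a min-max (resp.\ max-min) objective, since once removed it no longer contributes tunnels to anything above it and its own value can only be matched or worsened by any other choice.

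Next I would handle the shared preprocessing: build the arrangement of the $m$ edges in $O(m\log m+k)$ expected time via Mulmuley's algorithm, and from it compute each edge's initial value (tunnel count, tunnel length, or minimum tunnel gap) in $O(k)$ additional time by a single traversal. This establishes the additive $O(m\log m+k)$ term that appears in every bound.

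For \textsc{MinMaxTunnels} and \textsc{MinMaxTunnelLength}, the key step is the Fibonacci-heap bookkeeping. I would store the $m$ edges keyed on their current value; each iteration performs one \textsc{extract-min} to choose the bottommost edge, then walks that edge through the arrangement, and for each crossing edge performs a \textsc{decrease-key} (subtracting $1$ for \textsc{MinMaxTunnels}, or subtracting the crossing length $\mathit{casing\ width}/\sin\alpha$ for \textsc{MinMaxTunnelLength}). Amortizing, we pay $O(\log m)$ per \textsc{extract-min} and $O(1)$ per \textsc{decrease-key}, giving $m$ extract-mins and $k$ decrease-keys, with total edge-traversal cost $O(k)$; summing yields $O(m\log m+k)$ expected time.

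The main obstacle, and the part deserving the most care, is \textsc{MaxMinTunnelDistance}, because removing the bottommost edge merges two consecutive tunnel gaps on each crossing edge, and the per-edge minimum gap cannot be maintained in $O(1)$ time the way a simple counter can. Here I would describe, for each edge, an augmented balanced binary search tree whose leaves are the segments between consecutive crossings (each storing its length) and whose internal nodes store the subtree minimum, so the root always holds that edge's value were it bottommost among the survivors. A single update merges two adjacent leaves and recomputes one gap length, restoring all augmented minima in $O(\log m)$ time; coupled with a Fibonacci heap supporting \textsc{extract-max} and \textsc{increase-key}, the $k$ updates cost $O(k\log m)$ and the $m$ extractions $O(m\log m)$, for a total of $O(m\log m+k\log m)$ expected time. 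Assembling the three cases completes the proof.
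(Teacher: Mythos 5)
Your proposal matches the paper's own proof essentially step for step: the same greedy scheme of repeatedly selecting the most favorable edge as bottommost, the same $O(m\log m+k)$ expected-time arrangement construction via Mulmuley's algorithm, the same Fibonacci-heap bookkeeping with \textsc{extract-min}/\textsc{decrease-key} for the two min-max problems, and the same per-edge augmented balanced search tree with leaf merging for \textsc{MaxMinTunnelDistance}. If anything, your exchange-style justification of the greedy order is more explicit than the paper, which simply asserts that correctness is easy to see.
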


\subsection{Weaving model}

In the weaving model, the polynomial time algorithm for {\sc
MinMaxTunnels} comes from the fact that the problem of
directing an undirected graph, and minimizing the maximum
indegree, can be solved in time quadratic in the number of
edges~\cite{v-mmi-04}. We apply this on the edge crossing graph
of the drawing, and hence we get $O(m^4)$ time. For minimizing
tunnel length per edge, we can show:
\begin{theorem}
{\sc MinMaxTunnelLength} is NP-hard in the weaving model.
\end{theorem}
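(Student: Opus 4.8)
The plan is to prove NP-hardness of {\sc MinMaxTunnelLength} in the weaving model by reduction from a known NP-hard problem, most naturally a restricted version of Boolean satisfiability or an orientation/partition problem whose structure matches the "length-per-edge" nature of the objective. Since we are in the weaving model, the crossing choices at distinct crossings are independent (each crossing is an independent binary decision: which of the two edges is the tunnel), and the objective is a bottleneck constraint on the sum of tunnel lengths along each edge. This is precisely the shape of a load-balancing or scheduling feasibility question, so I would aim the reduction at the decision version: given a target bound $L$, can we choose crossing orientations so that every edge has total tunnel length at most $L$? I would reduce from a variant of {\sc Partition}, {\sc 3-SAT}, or most likely a numerical problem such as {\sc 3-Partition} or {\sc Subset-Sum}-style packing, because the availability of tunnel lengths ${\mathit{casing width}}/\sin\alpha$ lets us engineer edges whose crossing angles encode arbitrary (rational, then scaled to integer) weights.

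The key steps I would carry out, in order, are as follows. First, I would fix the target source problem and its standard NP-complete decision formulation. Second, I would design a gadget: a drawing in which a distinguished edge participates in a controlled set of crossings whose angles realize prescribed lengths, so that forcing that edge's total tunnel length under the bound $L$ corresponds exactly to selecting a subset (or an assignment) in the source instance. The heart of the construction is that whenever a crossing is resolved so that edge $e$ is the tunnel, $e$ pays the corresponding length, and the bound $L$ then acts as a capacity; by choosing angles appropriately I can make each crossing contribute a chosen weight. Third, I would add consistency gadgets enforcing that a crossing's orientation, seen from the other edge's perspective, imposes the complementary cost, so that the two edges at a crossing cannot both avoid paying — this is what turns a purely local choice into a globally constrained combinatorial decision. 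Fourth, I would argue both directions of the reduction: a satisfying assignment (or valid partition) yields a casing with maximum tunnel length at most $L$, and conversely any casing meeting the bound decodes into a solution of the source instance. Finally, I would verify that the drawing has polynomially many vertices, edges, and crossings, and that the lengths (hence $L$) can be represented in polynomial size, so the reduction is polynomial; here I would need to be careful that the irrational $1/\sin\alpha$ factors are handled by perturbing to rational angles or by a scaling argument.

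The main obstacle I expect is geometric realizability: unlike an abstract graph-orientation NP-hardness proof, here the instance must be an actual straight-line drawing in the plane, so I cannot freely specify which edges cross and at what angles. I must exhibit concrete edge placements whose crossing pattern is exactly the gadget graph and whose crossing angles yield the required tunnel lengths, while simultaneously respecting the standing assumptions that no vertex lies on another edge, no three edges meet at a point, and crossings are sufficiently separated. Controlling the \emph{set} of crossings (ensuring no unwanted extra crossings appear and that each desired crossing occurs at the intended angle) is the delicate part; I would handle it by laying out the weight-bearing edges as nearly-parallel bundles or along a common baseline so their pairwise angles are tunable, and by routing the "variable" edges through narrow corridors to force the intended incidences. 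The cleanest route is probably to reduce from a problem that is already known to be NP-hard for geometric arrangements, or to reuse a grid-like layout (analogous to the $\Omega(n^2)$-triangle and complete-bipartite constructions appearing earlier in the paper) so that the realizability bookkeeping reduces to a routine but careful angle-and-length computation rather than an ad hoc planarity argument.
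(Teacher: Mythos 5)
Your proposal is a plan rather than a proof, and the pieces it leaves open are exactly the ones that carry the difficulty. The paper's proof reduces from {\sc planar 3-sat} (Lichtenstein), and the planarity of the source problem is not incidental: it is what makes the geometric realizability obstacle---which you correctly identify as the crux---actually go away, because the variable--clause incidence structure can then be laid out in the plane and connected by ``channels'' of crossing segments. You never commit to a source problem, oscillating between SAT variants and numerical packing problems ({\sc Partition}, {\sc Subset-Sum}, {\sc 3-Partition}), and the numerical route has concrete obstacles you do not address: the tunnel length $\mathit{casing\ width}/\sin\alpha$ at a crossing is the \emph{same} for both edges involved, so a crossing is not an asymmetric ``$e$ pays $w_i$ or not'' decision but a symmetric orientation choice in which exactly one of the two edges pays the common cost. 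A distinguished edge crossed by weight-encoding edges can simply push every cost onto the crossing edges unless those edges are themselves constrained, which is precisely what your unspecified ``consistency gadgets'' would have to do---and building them is the entire content of the proof. Encoding arbitrary weights in angles also raises precision issues (near-parallel crossings with exponentially accurate slopes) that you wave at but do not resolve.

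The paper's construction avoids all of this by not encoding numbers at all. It uses unit-length segments with only four slopes ($\pm 4$, $\pm\tfrac14$), so there are exactly two possible tunnel lengths: $w$ at perpendicular crossings and $2.125\,w$ at non-perpendicular ones. The reduction is driven by a combinatorial threshold property (A)---each segment has at most two perpendicular tunnels or one non-perpendicular tunnel---which separates drawings with maximum tunnel length $2.125\,w$ from those with at least $3w$ (this gap also yields a no-PTAS corollary, which your approach would not obviously give). Variables are even cycles of segments alternating perpendicular and non-perpendicular crossings, whose two consistent ``cyclic overlap'' states encode truth values; channels of non-perpendicular crossings propagate the state through the plane; a clause is one segment crossed perpendicularly by three channel ends, and property (A) on that segment is exactly satisfaction of the clause. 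These three gadgets, plus the choice of a planar source problem, are the missing ideas; without them your outline does not yet constitute a proof.
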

\begin{proof} The reduction is from {\sc planar 3-sat}, shown NP-hard by Lichtenstein~\cite{l-pfu-82}. The reduction is similar to the one for maximizing minimum visible perimeter length in sets of opaque disks of unit size~\cite{chks-aapsm-06}.
Note that the proof implies that no PTAS exists. The reduction
only uses edges that intersect two or three other edges, so
restricting the number of intersections per edge to be constant
leaves the problem NP-hard. Also, the number of orientations of
edges is constant.

A cased drawing of a set of line segments has property (A) if every line segment has at most two tunnels at crossings with a perpendicular segment, or one tunnel at a crossing with a non-perpendicular segment. Our reduction is such that a {\sc planar 3-sat} instance is satisfiable if and only if a set of line segments has a cased drawing with property (A).

We arrange a set of line segments of equal length, using only four orientations. The slopes are $-4$, $-\frac{1}{4}$, $+\frac{1}{4}$, and $+4$. If two perpendicular line segments cross, then one has tunnel length equal to the width $w$ of the casing at the crossing. If two other line segments cross, then one edge has tunnel length $w/\sin(\gamma)= 2,125\cdot w$ at the crossing, where $\gamma=2\cdot\arctan(\frac{1}{4})$ is the (acute) angle between the line segments. Therefore, a cased drawing with property (A) has tunnel length at most $2,125\cdot w$, whereas a cased drawing that does not satisfy property (A) has an edge that has tunnel length at least $3\cdot w$. This shows the direct relation between property (A) and {\sc MinMaxTunnelLength}, and provides the gap that shows that no PTAS exists.
\begin{figure}[b]
\begin{center}
\vspace{-\baselineskip}
\includegraphics[width=.9\textwidth]{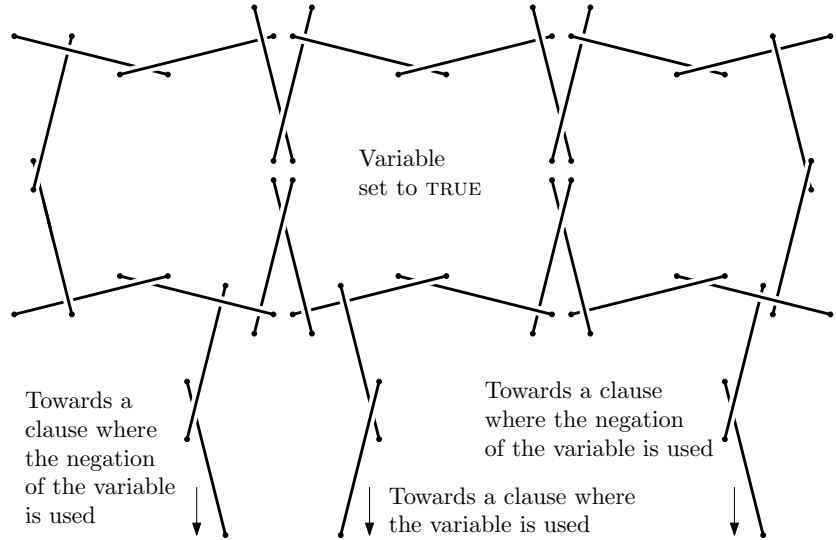}
\end{center}
\vspace{-\baselineskip}
\caption{Boolean variable and the connection of channels.}
\label{f:NPvariable}
\end{figure}

A Boolean variable $x_i$ is modeled by a cycle of crossing line
segments as in Fig.~\ref{f:NPvariable}. Along the cycle,
crossings alternate between perpendicular and
non-perpendicular, and hence it has even length. The variable
satisfies property~(A) iff the cycle has cyclic overlap, which
can be clockwise or counterclockwise. One state is associated
with $x_i=\,${\sc true}, the other is associated with
$x_i=\,${\sc false}. In each state, the line segments of the
cycle alternate in allowing an additional, perpendicular line
segment to have a bridge over the line segment of the cycle. In
the figure, where the cycle is in the {\sc true}-state, the
line segments with slope $+\frac{1}{4}$ and $+4$ allow such an
extra tunnel under a line segment that is not from the cycle.
If the cycle is in the {\sc false}-state, the line segments
with slope $-4$ and $-\frac{1}{4}$ allow the extra tunnel. We
use the line segments of slope $-\frac{1}{4}$ to make
connections and channels to clauses where $\oli{x_i}$ occurs,
and the line segments with slope $+\frac{1}{4}$ for clauses
where $x_i$ occurs. Note that the variable can be made larger
easily to allow more connections, in case the variable occurs
in many clauses.

Channels are formed by line segments that do not cross
perpendicularly. So any line segment of the channel can have
a tunnel at at most one of its two crossings, or else property (A)
is violated. Note that a sequence of crossing line segments with
slopes such as $-4,\,+4,\,+\frac{1}{4},\,-\frac{1}{4}$ gives
a turn in the channel. The exact position of the crossing is
not essential and hence we can easily reach any part of the
plane with a channel, and ending with a line
segment of any orientation.
\begin{figure}[t]
\begin{center}
\includegraphics{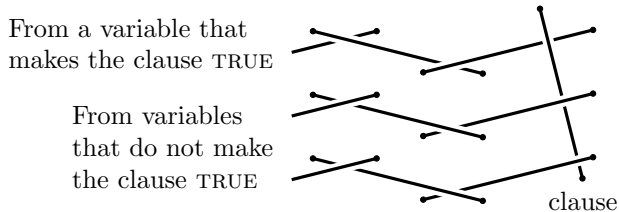}
\end{center}
\vspace{-1.5\baselineskip}
\caption{A clause construction.}
\vspace{-\baselineskip}
\label{f:NPclause}
\end{figure}
A {\sc 3-sat} clause is formed by a single line segment that is
crossed perpendicularly by three other line segments, see
Fig.~\ref{f:NPclause}. Property~(A) holds if the clause
line segment has at most two tunnels. This corresponds
directly to satisfiability of the clause.

With this reduction, testing if property~(A) holds is equivalent
to testing if the {\sc planar 3-sat} instance is satisfiable, and
NP-hardness follows.\qed
\end{proof}
In the remainder of this section we show how to solve {\sc
MaxMinTunnelDistance}. We observe that there are polynomially
many possible values for the smallest tunnel distance, and
perform a binary search on these, using {\sc 2-sat} instances as the
decision tool.

We first compute the arrangement of the $m$ edges
to determine all crossings. Only distances between two---not
necessarily consecutive---crossings along any edge can give the
minimum tunnel distance. One edge crosses at most $m-1$ other
edges, and hence the number of candidate distances, $K$, is
$O(m^3)$. Obviously, $K$ is also $O(k^2)$. From the arrangement
of edges we can determine all of these distances in $O(m\log m
+ K)$ time. We sort them in $O(K \log K)$ time to set up a
binary search. We will show that the decision step takes
$O(m+K)$ time, and hence the whole algorithm takes $O(m\log m +
K\log K)=O((m+K)\log m)$ time.

Let $\delta$ be a value and we wish to decide if we can set the
crossings of edges such that all distances between two tunnels
along any edge is at least $\delta$. For every two edges $e_i$
and $e_j$ that cross and $i<j$, we have a Boolean variable
$x_{ij}$. We associate $x_{ij}$ with {\sc true} if $e_i$ has a
bridge at its crossing with $e_j$, and with {\sc false}
otherwise. Now we traverse the arrangement of edges and
construct a {\sc 2-sat} formula. Let $e_i$, $e_j$, and $e_h$ be three
edges such that the latter two cross $e_i$. If the distance
between the crossings is less than $\delta$, then $e_i$ should
not have the crossings with $e_j$ and $e_h$ as tunnels. Hence,
we make a clause for the {\sc 2-sat} formula as follows
(Fig.~\ref{fig:2sat}): if $i<j$ and $i<h$, then the clause is
$(x_{ij} \vee x_{ih})$; the other three cases ($i>j$ and/or
$i>h$) are similar. The conjunction of all clauses gives a
{\sc 2-sat} formula that is satisfiable if and only if we can set the
crossings such that the minimum tunnel distance is at least
$\delta$. We can construct the whole {\sc 2-sat} instance in $O(m+K)$
time since we have the arrangement, and satisfiability of {\sc 2-sat}
can be determined in linear time~\cite{EveItaSha-SJC-76}.
\begin{figure}[t]
\begin{center}
\includegraphics{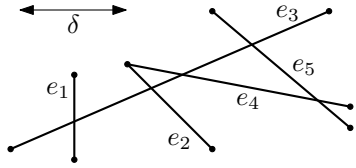}
\end{center}
\vspace{-\baselineskip}
\caption{The {\sc 2-sat}
formula $(\ox_{13}\vee \ox_{23}) \wedge (\ox_{23}\vee
x_{34}) \wedge (\ox_{23} \vee x_{35}) \wedge (x_{34}\vee
x_{35})$.} \label{fig:2sat}
\end{figure}

\begin{theorem}
Given a straight-line drawing of a graph with $n$ vertices and
$m=\Omega(n)$ edges,  we can solve {\sc MaxMinTunnelDistance}
in $O((m+K)\log m)$ expected time in the weaving model, where
$K=O(m^3)$ is the total number of pairs of crossings on the
same edge.
\end{theorem}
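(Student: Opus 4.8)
The plan is to decouple the problem into a search over candidate threshold values together with a feasibility test for each candidate, solving the latter by a reduction to \textsc{2-sat}. First I would compute the arrangement of the $m$ segments (e.g.\ with Mulmuley's randomized incremental construction) in $O(m\log m+k)$ expected time, recording for each edge the sorted sequence of its crossings. Since the minimum consecutive-tunnel distance is always realized by the gap between two crossings on a common edge, the set of candidate answers is exactly the set of pairwise distances between crossings lying on the same edge; there are $K=O(m^3)$ of these, they can be read off the arrangement in $O(m\log m+K)$ time, and sorting them in $O(K\log K)=O(K\log m)$ time sets up a binary search with only $O(\log m)$ rounds.

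For the decision problem at a threshold $\delta$---can the over/under choices be set so that every pair of consecutive tunnels along an edge is at distance at least $\delta$?---I would exploit that in the weaving model each crossing is an independent binary choice. I introduce one Boolean variable $x_{ij}$ per crossing pair $e_i,e_j$ with $i<j$, where $x_{ij}$ true means $e_i$ bridges over $e_j$. The key modelling step is the observation that the target condition is equivalent to forbidding any two crossings on a common edge that lie within distance $<\delta$ from being simultaneously tunnels of that edge. Because ``the crossing $e_i\cap e_j$ is a tunnel of $e_i$'' is a single literal ($\lnot x_{ij}$ when $i<j$), the constraint ``not both tunnels'' for two crossings $e_j,e_h$ on $e_i$ is a disjunction of just two literals, namely $(x_{ij}\vee x_{ih})$. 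Emitting one such clause for every too-close crossing pair on every edge yields a \textsc{2-sat} instance with $O(k)$ variables and $O(K)$ clauses that is satisfiable exactly when distance $\delta$ is achievable.

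The complexity then follows mechanically. Each decision round builds the formula in $O(m+K)$ time by a single traversal of the arrangement and decides satisfiability in linear time via the algorithm of Even, Itai, and Shamir, so the $O(\log m)$ rounds cost $O((m+K)\log m)$; since $k=O(m+K)$ this also dominates the arrangement and sorting preprocessing, giving the claimed $O((m+K)\log m)$ expected bound, the expectation coming solely from the randomized arrangement step.

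I expect the main obstacle to be the correctness of the reduction, specifically proving the equivalence between the objective ``minimum distance between consecutive tunnels is at least $\delta$'' and the purely pairwise condition ``no two crossings within distance $<\delta$ are simultaneously tunnels.'' The forward direction requires noting that if two tunnels lie within $\delta$ of each other then some consecutive pair of tunnels inside the enclosed span also lies within $\delta$ (their gaps sum to less than $\delta$), while the converse is immediate; one must also confirm that encoding each over/under decision as a free binary variable introduces no hidden global constraint in the weaving model, which is exactly the feature that keeps every clause to two literals and separates this case from the harder stacking setting.
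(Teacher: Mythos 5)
Your proposal is correct and follows essentially the same route as the paper: binary search over the $K=O(m^3)$ candidate distances between crossings on a common edge, with each decision step reduced to a \textsc{2-sat} instance (one variable per crossing, one clause $(x_{ij}\vee x_{ih})$ per too-close pair on an edge) solved in linear time by Even--Itai--Shamir, on top of a randomized arrangement construction. Your closing remark---that forbidding \emph{all} close tunnel pairs is equivalent to the stated objective about \emph{consecutive} tunnels, since gaps between intermediate tunnels sum to less than $\delta$---is a point the paper leaves implicit, and making it explicit is a welcome addition.
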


\section{Conclusions and Open Problems}

We presented polynomial time algorithms or NP-hardness results for a number of optimization problems that are motivated by cased drawings. Naturally, we would like to establish the difficulty of the {\sc MinMaxSwitches} problem. We would also like to implement our algorithms to visually evaluate the quality of the resulting drawings.


\bibliographystyle{abbrv}
\bibliography{TB}


\newpage
\appendix

\section{Removing the restrictions}\label{sec:restrictions}

The restrictions that we impose on the input drawing---no vertex lies on a or very close to an edge, no more than two edges cross in one point, and crossing are not too close---have a significant influence on the problems that we study. For example, if we allow edges to partly overlap with vertices, then it is natural to always draw the vertex on top with a casing around it. In this case, however, there might be no cased drawing in the stacking model, because three edges that necessarily give cyclic overlap can easily be constructed. Testing if cyclic overlap occurs, and hence, if a cased drawing in the stacking model exists can be done by topological sort.

If we remove the restriction that edge crossings are not too close,
we may have three edges that intersect so close that it is not
possible for the edges to have a tunnel at the one crossing and a
bridge at the other. Consequently, three such edges must (locally) be
stacked, where one edge has a single bridge over both other edges,
and both other edges have a single tunnel. Allowing such ``triple
crossings'' makes the problem {\sc MinTotalSwitches} NP-hard.
\begin{theorem}
If triple crossings of edges are allowed, then {\sc MinTotalSwitches}
is NP-hard in both the weaving and the stacking model.
\end{theorem}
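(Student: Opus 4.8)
The plan is to reduce from \textsc{Max-Cut}, which is NP-hard: given a graph $H=(V,E)$ and a target $t$, I would construct in polynomial time a straight-line drawing $D$ (with triple crossings allowed) together with a threshold $S$, so that $D$ admits a casing with at most $S$ switches if and only if $H$ has a cut of size at least $t$. The guiding intuition comes from the earlier lemmas: a switch-free casing is exactly a $2$-coloring of the edges into ``bridge'' (top) and ``tunnel'' (bottom) with crossing edges receiving opposite colors, and each switch is a place where an edge is forced to change color. A triple crossing is the source of frustration: it forces exactly one of its three edges to bridge over the other two, which forbids the clean $2$-colored behavior and lets me plant cut-like constraints that cannot all be satisfied simultaneously.

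The construction would use three kinds of gadgets. A \emph{variable gadget} for each $v\in V$ is a cyclic chain of crossings, in the spirit of the variable cycle of Fig.~\ref{f:NPvariable}, having exactly two switch-minimal states, which I identify with the two sides of the cut; any deviation from either state inside the gadget costs at least one extra switch, so an optimal casing commits each variable to a single side and propagates it along the attached wires. \emph{Copy, turn, and crossover gadgets}, built from triple crossings, then route a variable's state to every edge of $H$ in which $v$ participates, bend wires through the plane, and let unrelated wires cross without creating spurious constraints or switches when the states are transmitted faithfully. Finally, an \emph{edge gadget} for each $\{u,v\}\in E$, again using a triple crossing, is designed so that it can be cased without an extra switch precisely when the incoming states of $u$ and $v$ disagree (the edge is cut), and forces one additional switch when they agree.

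With these gadgets the accounting should yield a fixed base cost $B$ from the gadgets in their intended states, plus one extra switch for every \emph{monochromatic} (uncut) edge of $H$, so that the minimum number of switches equals $B+|E|-\mathrm{maxcut}(H)$; setting $S=B+|E|-t$ completes the reduction. I would first verify the \textbf{completeness} direction, that a cut of size $\ge t$ yields a casing with $\le S$ switches, by casing each gadget in the state dictated by the cut, which is routine. The \textbf{soundness} direction is the main obstacle: I must show that no casing can beat the intended count, i.e.\ that the gadgets cannot be cheated. The difficulty is that a single switch placed on a shared edge could in principle help several constraints at once, so I need gadgets whose switch cost is provably additive and locally unavoidable, and I must rule out global tricks available only in the weaving model.

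This last point also delivers the ``both models'' clause. Because every crossing in the construction can be made to sit inside a triple crossing, each is forced to be locally stacked, and I would argue that an optimal \emph{weaving} casing of $D$ can always be reoriented into a global \emph{stacking} with no more switches, so that $\mathrm{OPT}_{\mathrm{weaving}}(D)=\mathrm{OPT}_{\mathrm{stacking}}(D)=B+|E|-\mathrm{maxcut}(H)$. Since stacked casings are a special case of weaving casings, the threshold $S$ is then simultaneously correct for both models, and NP-hardness follows for each. The recurring technical crux is to keep the switch-accounting exact while making every gadget small enough that the whole drawing, including all wire crossings handled by triple-crossing crossover gadgets, is produced in polynomial time.
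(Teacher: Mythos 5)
You have correctly identified the overall shape of the argument---a gadget reduction in which triple crossings supply the frustration that makes switch minimization hard (recall that without them the weaving version is polynomial, by the matching algorithm of Section~\ref{sec:switches})---but the proposal leaves unproved exactly the steps on which such a reduction stands or falls, and it is missing the one idea that makes those steps work in the paper's proof. The paper reduces from \textsc{planar max-2sat}; its variable gadget is an even cycle of \emph{bundles} of $m+1$ parallel edges (Fig.~\ref{fig:max2sat}), and this amplification is precisely what makes soundness go through: a variable that tries to present different states to different wires must incur at least $m+1$ switches inside a bundle, which already exceeds the total penalty $m$ that any casing would ever need to pay, so cheating is never profitable and the switch count is exactly the number of unsatisfied clauses. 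Your variable gadget, by contrast, is a single cyclic chain of crossings, and you yourself flag the resulting problem (``a single switch placed on a shared edge could in principle help several constraints at once'') without resolving it: with single-edge gadgets, one switch inside a variable cycle really can flip the state seen by many edge gadgets, so the claimed identity $\mathrm{OPT}=B+|E|-\mathrm{maxcut}(H)$ is not locked in, and the soundness direction---which you explicitly defer---is exactly where the proof breaks. A further unproved claim: a triple crossing naturally yields an \emph{asymmetric} penalty (the paper's clause gadget pays a switch only in the single configuration where both incoming channel edges arrive with bridges), so your symmetric ``penalize agreement'' edge gadget is not a single triple crossing; it would itself need a construction, e.g.\ two OR-type clauses with one incoming wire negated via channel parity.

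Your choice of \textsc{max-cut} as the source problem also creates an obligation that the paper deliberately avoids. Planar \textsc{max-cut} is solvable in polynomial time, so you cannot make your instance planar; you therefore genuinely need crossover gadgets, and in this problem a wire crossing is not free: channel edges that cross in the plane really cross in the drawing and constrain each other's casing. Designing a triple-crossing crossover that transmits two independent states with provably zero extra switch cost, and showing that no casing can cheat through it, is a substantial unconstructed component on which your entire accounting depends; the paper sidesteps it by reducing from a \emph{planar} problem, so that channels never cross. Finally, for the ``both models'' clause you propose to show that every optimal weaving casing can be reoriented into a stacking---stronger than necessary and unproven. It suffices to observe, as the paper does, that the intended optimal casing is itself a stacking, so the optima of the two models coincide on the constructed instances.
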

\begin{proof}
By reduction from {\sc planar max-2sat} shown NP-hard by~\cite{ghms-apsml-93}. Assume an instance of such a problem which has $n$ Boolean variables and $m$ clauses. The objective is to satisfy as many of the {\sc 2sat} clauses as possible. We reduce the instance to an instance of cased drawing such that the number of switches is the same as the number of unsatisfied clauses in the {\sc planar max-2sat} instance.

\begin{figure}[h]
\begin{center}
\includegraphics[width=.9\textwidth]{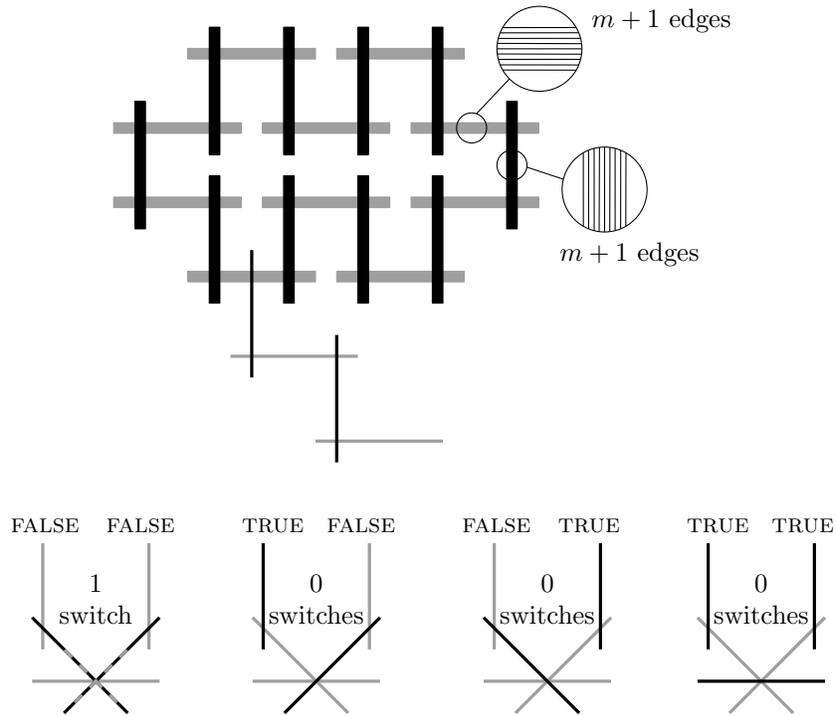}
\end{center}
\caption{A variable, a channel, and four cases of a clause construction.}
\label{fig:max2sat}
\end{figure}

A variable is an even cycle of bundles of edges. Each bundle
consists of $m+1$ parallel edges that are either horizontal or vertical.
A horizontal bundle intersects two vertical bundels, and each
vertical bundle intersects two horizontal bundles, see Fig.~\ref{fig:max2sat}.
Since the variables part of the construction is bipartite, it need not
contain any switches. Either all horizontal edges are on top of the vertical
ones, or vice versa. This corresponds to the {\sc true} and {\sc false}
assignment of the variable, respectively.

We tap off a channel from a variable construction with a single vertical
edge as in Fig.~\ref{fig:max2sat}. The edge crosses all edges of a horizontal
bundle, and to avoid switches, it must be below or above all of them
(depending on whether the vertical edges in the variable are below or above).
The channel itself is a sequence of single edges that leads from the variable
construction to a clause construction.
A channel has no switches if the edges alternate in having
bridges at both crossings and having tunnels at both crossings.
The parity of the number of edges in a channel determines if the channel
arrives at a clause with an edge that already has a tunnel, or with an
edge that already has a bridge. By using horizontal, vertical, and diagonal
edges, we can always get the parity as desired and end with an edge in
any orientation.
So far, no switches are needed yet in a {\sc MinTotalSwitches} cased drawing.

A clause consists of a single edge that has a triple crossing with the two
channels that come from variables that occur in the clause. The clause
edge itself cannot have a switch, because it only has the triple intersection.
To avoid switches on the last edges of the channels, we can put the clause
edge on top if both channel edges already have a tunnel, or we can put the
clause edge at the bottom otherwise. If both channel edges already have a
bridge, then there is no way to avoid a switch because one of the them will
have to go below the other. There will be exactly one switch; in all other
cases there are no switches at the clause. So we let the parity of any chain
be such that the last channel edge be such that it has a bridge if the
variable has a state that gives {\sc false} in the clause.

The maximum number of switches needed is obviously at most $m$, the number of
clauses of the {\sc planar max-2sat} instance. We used bundles of $m+1$
parallel edges in the variables to make sure that a variable cannot have
a mixed state without giving at least $m+1$ switches already. If the bundles
were just single edges, then with only two switches, one part of the channels can
use the {\sc true} state and the other part the {\sc false} state of the variable,
possibly satisfying many more clauses. The use of bundles prevents this
possibility.

Note that the minimum number of switches will be achieved with a stacking
drawing. Hence, the problem is NP-hard in both models.\qed
\end{proof}

\section{Optimal drawing in the weaving model}\label{sec:optweave}

Fig.~\ref{fig:optWeaveNotRealizable} shows that the weaving model is stronger than the realizable model for {\sc MinTotalSwitches}---no cased drawing of this graph in the realizable model can reach the optimum number of 12 switches. The reasoning is quite similar to the one employed for the construction in Fig.~\ref{fig:weavingopt}.
The thickly drawn bundles of $c>12$ parallel edges must be cased as shown (or its mirror image) else there would be at least $c$ switches in each bundle. The 8 vertical and horizontal ``normal'' edges must cross the bundles consistently with the casing of the bundles, and this already leads to the 12 switches as drawn in this figure. Thus, any deviation from the drawing in the casing of the 16 crossings between the normal edges would create additional switches. However, the normal edges as drawn here constitute a perfect $4 \times 4$ weaving which can not be realized (see~\cite{PacPolWel-Algo-93}).
\begin{figure}[b]
  \centering
  \vspace{-\baselineskip}
  \includegraphics[width=.55\textwidth]{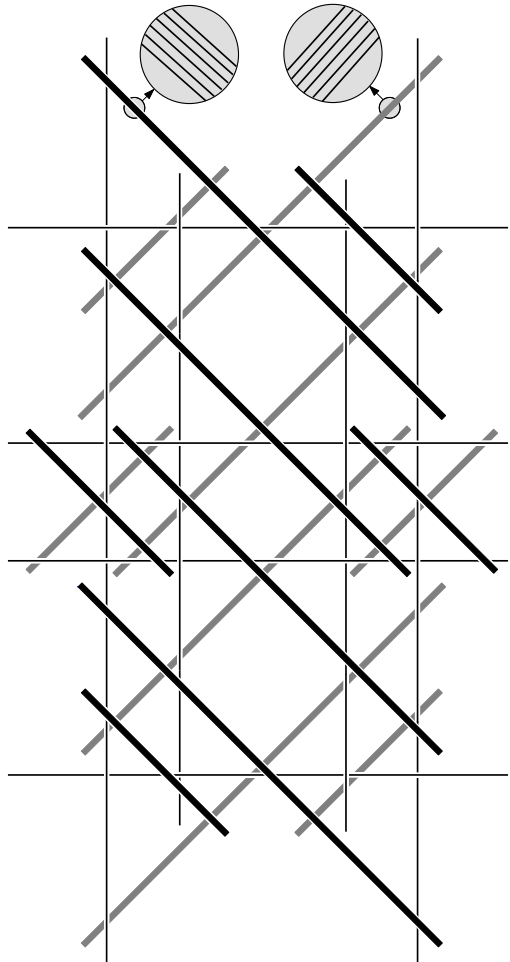}
  \vspace{-\baselineskip}
  \caption{Optimal drawing in the weaving model for {\sc MinTotalSwitches}.}
  \label{fig:optWeaveNotRealizable}
\end{figure}

\end{document}